\newcommand{\NP}{\mathrm{NP}}     
\newcommand{\enum}[1]{\textsc{Enum}\smash{\cdot}#1}
\newdefinition{definition}{Definition}
\newdefinition{example}{Example}
\newdefinition{remark}{Remark}
\newtheorem{proposition}{Proposition}
\newtheorem{theorem}{Theorem}
\newtheorem{lemma}{Lemma}
\newtheorem{corollary}{Corollary}
\DeclareMathOperator{\rank}{rank}
\title{Monadic second-order model-checking on decomposable matroids}
\author{Yann Strozecki}
\ead{strozecki@logique.jussieu.fr}
\address{\'Equipe de logique math\'ematique, Universit\'e Paris 7 - Denis Diderot}
\begin{document}

\begin{abstract}
A notion of branch-width, which generalizes the one known for graphs, can be defined for matroids.
 We first give a proof of the polynomial time model-checking of monadic second-order formulas on representable matroids of bounded branch-width, 
 by reduction to monadic second-order formulas on trees. This proof is much simpler than the one previously known.
 We also provide a link between our logical approach and a grammar that allows to build matroids of bounded branch-width.
Finally, we introduce a new class of non-necessarily representable matroids, described by a grammar and
on which monadic second-order formulas can be checked in linear time. 
\end{abstract}

\maketitle
\section{Introduction}

The model-checking of monadic second-order formulas is a natural and extensively studied problem
that is relevant to many fields of computer science such as verification or database theory. 
This problem is hard in general (since monadic second-order logic, $MSO$ for short, can express $\NP$-complete properties like $3$-colorability) 
but it has been proved tractable on various structures.
For example, it is  decidable in linear time on trees \cite{thatcher1968gfa}
 thanks to automata techniques. It also remains linear time decidable \cite{courcelle1991graph} on the widely studied class of graphs 
of bounded tree-width. Since then, a lot of similar results  have been found, either with similar
notions of  width, like clique-width and rank-width, or for extensions of $MSO$, for instance by counting predicates (see \cite{Logic_graph_algorithm,hlineny2008width}). 

In this article, we study the model-checking of monadic second order sentences on matroids 
and especially on representable matroids, which are a natural generalization of both graphs and matrices. 
Natural notions of decomposition such as tree-width or branch-width can be adapted in this context.
It is also interesting to note that tree-width and branch-width on matroids are generalizations of the same notions
on graphs. In fact, the branch-width of a $2$-connected graph is equal 
to the branch-width of its cycle matroid \cite{hicks2007branchwidth,mazoit2007branchwidth}.

 The monadic second-order logic on matroids, 
denoted by $MSO_M$, enables to express many interesting matroids properties (see \cite{hlineny2003matroid} and the references therein)
such as the connectivity or the representability over $\mathbb{F}_2$ or $\mathbb{F}_3$.
Recently, the model-checking of $MSO$ formulas on representable matroids of bounded branch-width has been studied and
 it has been proved to be decidable in a time linear in the size of the matroid \cite{hlinveny2006bwp}.
 This result has been subsequently extended in \cite{kral2010decomposition} to a broader class of matroids.
The first contribution of this article is to introduce an alternative method to study these matroids, by an appropriate decomposition 
into labeled trees, called \emph{enhanced trees} and a translation of $MSO_{M}$ into $MSO$. For this purpose, we introduce the notion of \textit{signature} 
over decomposed matroids which appears to be a useful general tool to study several classes of matroids. 
Signatures can be seen as the states of a nondeterministic bottom-up automaton which checks the dependence of 
a set of a matroid represented by a set of leaves of an enhanced tree.

As a corollary of this method, we give a new proof of the linear time model-checking of $MSO_{M}$ formulas on representable matroids of bounded branch-width,
 and also an enumeration algorithm of all tuples satisfying a $MSO_{M}$ query with a linear delay.
We apply this result to the problem $\textsc{A-Circuit}$, which asks to extend a set of a matroid into a circuit.
Both decision and enumeration versions of this problem have been well studied \cite{1113216} and, in the case of
$\mathbb{F}$-matroids of bounded branch-width, we obtain better algorithms.

From  this starting result,  we  derive a general way to build matroid grammars, inspired by the parse tree of \cite{hlinveny2006bwp}. 
We first introduce a grammar for matrices, which is similar to the one for representable matroids introduced in \cite{hlinveny2006bwp}.
We show why it is more appropriate to see this grammar as a matrix one rather than a matroid one. Thanks to the connection with enhanced trees,
 we easily prove that it describes  
the representable matroids of bounded branch-width. We then build the class of matroid $\mathcal{T}_k$ by means of series-parallel operations. 
Since these operations allows to combine non representable matroids, the class $\mathcal{T}_k$ contains different matroids
than those studied in the first part of the article.
 We give some useful insights about the structures of matroids in $\mathcal{T}_k$ and its relations with the branch-width.
Using the same approach as for the matroids of bounded branch-width, we build a $MSO$ formula expressing the dependence relation over terms 
representing a matroid of $\mathcal{T}_k$. It enables us to prove that the model-checking of $MSO_M$ formulas is decidable in linear time over $\mathcal{T}_k$.

\section{Matroids and Branch-width}
\subsection{Matroids}
\label{sec:matroid}
Matroids have been designed to abstract the notion of dependence that appears, for example, in graph theory or in linear algebra.
All needed informations about matroids (and the proofs of what is stated in this section) can be found in the book Matroid Theory by J. Oxley \cite{oxley1992mt}.
\begin{definition}
A matroid\index{matroid} is a pair $(S,\mathcal{I})$ where $S$ is a finite set, called the ground set, and
 $\mathcal{I}$ is included in $\mathcal{P}(S)$, the power set of $S$.
Elements of $\mathcal{I}$ are said to be independent sets, the others are dependent sets.
A matroid must satisfy the following axioms:

\begin{enumerate}
\item $\emptyset \in \mathcal{I}$
 \item If $I \in  \mathcal{I}$ and $I' \subseteq I$, then $I'  \in \mathcal{I}$
 \item If $I_{1}$ and $I_{2}$ are in $\mathcal{I}$ and $|I_{1}| < |I_{2}|$, then
there is an element $e$ of $I_{2} - I_{1}$ such that $I_{1} \cup e \in  \mathcal{I}$.
\end{enumerate}
\end{definition}

The matroids, like the graphs, may have \emph{loops}, which are dependent singletons,
but in all this article we assume that the matroids are loop free.
In a matroid,  a \textit{base} is a maximal independent set for inclusion.
A \textit{circuit}\index{circuit} is a minimal dependent set for inclusion.

Let $M$ be a matroid, $S$ a subset of its elements, the restriction\index{restriction} of $M$ to $S$,
written $M|S$ is the matroid $(S,\mathcal{I})$, such that a set is in $\mathcal{I}$ if it is independent in $M$ and contained in $S$.

A function $f$ from a matroid $M$ to a matroid $N$, is a morphism of matroids if, 
for all dependent set $S$ of $M$, $f(S)$ is dependent in $N$. An isomorphism 
is a bijection such that itself and its inverse are matroid morphisms.

We can represent any finite matroid by giving the collection of its independent sets,
which can be exponential in the size of the ground set.
One usual way to address this problem is to assume that the matroid is represented by a black box deciding in unit time 
if a set is independent or not, see \cite{1113216}.
We also consider subclasses of matroids, for which we do not need the explicit set of independent sets,
because we can decide if a set is independent or not in polynomial time. The two following examples,
and the classes of matroids introduced in Sec.\ref{sec:terms}, are of this nature. 

\paragraph{Vector Matroid}

Let $A$ be a matrix, the \emph{vector} matroid of $A$
has for ground set the columns of $A$ and a set of column vectors is independent
if it is linearly independent.

\begin{definition}
 A matroid $M$ is representable over the field $\mathbb{F}$ if it is isomorphic to a vector matroid of a matrix $A$ with coefficients in $\mathbb{F}$.
We also say that $M$ is represented by $A$ and that $M$ is a $\mathbb{F}$-matroid.
\end{definition}

Note that there are matrices which are not similar\footnote{a matrix $A$ is similar to $B$ if there is an invertible matrix $S$ such that
 $A=SBS^{-1}$} but represent the same matroid. The matroids representable over $\mathbb{F}_{2}$ are called \emph{binary} matroids and those which are representable over any field
are called \emph{regular} matroids.

\begin{example}
 \begin{displaymath}
A =
\left( \begin{array}{ccccc}
1 & 0 & 1 & 0& 1 \\
1 & 1 & 0 & 0& 1  \\
0 & 1 & 1 & 1& 1 
\end{array} \right)
\end{displaymath}
The matrix $A$ is defined over $\mathbb{F}_2$.
The convention is to name a column vector by its position in the matrix.
Here the set $\left\lbrace 1,2,4 \right\rbrace $ is independent while  $\left\lbrace 1,2,3 \right\rbrace $ is dependent.
\end{example}

\paragraph{Cycle Matroid}
The second example is the \emph{cycle} matroid\index{cycle matroid} of a graph; such matroids are said to be \emph{graphic}.
Let $G$ be a graph, the ground set of its cycle matroid is the set of its edges.
A set is said to be dependent if it contains a cycle.
Here a base is a spanning tree if the graph is connected and a circuit is a cycle.

\begin{example}
\begin{figure}[ht]
\centering
\begin{tabular}{c c}
\parbox[l]{6cm}{ 
\begingroup
  \makeatletter
  \providecommand\color[2][]{%
    \errmessage{(Inkscape) Color is used for the text in Inkscape, but the package 'color.sty' is not loaded}
    \renewcommand\color[2][]{}%
  }
  \providecommand\transparent[1]{%
    \errmessage{(Inkscape) Transparency is used (non-zero) for the text in Inkscape, but the package 'transparent.sty' is not loaded}
    \renewcommand\transparent[1]{}%
  }
  \providecommand\rotatebox[2]{#2}
  \ifx\svgwidth\undefined
    \setlength{\unitlength}{175.99999422pt}
  \else
    \setlength{\unitlength}{\svgwidth}
  \fi
  \global\let\svgwidth\undefined
  \makeatother
  \begin{picture}(1,0.47384604)%
    \put(0,0){\includegraphics[width=\unitlength]{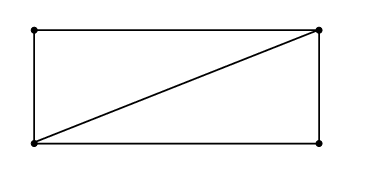}}%
    \put(0.49842942,0.43334145){\color[rgb]{0,0,0}\makebox(0,0)[lb]{\smash{$1$}}}%
    \put(-0.00442531,0.22808854){\color[rgb]{0,0,0}\makebox(0,0)[lb]{\smash{$2$}}}%
    \put(0.47898813,0.0078354){\color[rgb]{0,0,0}\makebox(0,0)[lb]{\smash{$3$}}}%
    \put(0.90261717,0.22492012){\color[rgb]{0,0,0}\makebox(0,0)[lb]{\smash{$4$}}}%
    \put(0.37334788,0.28088894){\color[rgb]{0,0,0}\makebox(0,0)[lb]{\smash{$5$}}}%
  \end{picture}%
\endgroup
}%\includegraphics[scale=0.30]{/home/yann/Bureau/Dessin/graph.eps}}
&
$
X =
\left( \begin{array}{ccccc}
1 & 0 & 0 & 1 & 1 \\
1 & 1 & 0 & 0 & 0  \\
0 & 1 & 1 & 0 & 1 \\
0 & 0 & 1 & 1 & 0
\end{array} \right)
$
\end{tabular}
\caption{A graph and its representation by a matrix over $\mathbb{F}_2$}
 \label{fig:graph}
\end{figure}
In Fig. \ref{fig:graph}, the set $\left\lbrace 1,2,4 \right\rbrace $ is independent whereas $\left\lbrace 1,2,3,4 \right\rbrace $ and $\left\lbrace 1,2,5 \right\rbrace $ are dependent.
\end{example}

\begin{remark}
Any cycle matroid of a graph $G$ is a binary matroid.  
To see this, one chooses an order on the edges and on the vertices of $G$ then build the incidence matrix of $G$ over any field.
The dependence relation is the same over the edges and over the vectors representing the edges.
\end{remark}

\paragraph{Other representations}

One interest of matroids is that they are axiomatizable in a number of different ways.
The rank\index{rank} function on matroids, similar to the rank function 
on vector spaces, plays an important role.
It is defined by:
$$\rank(B) = \max \{ |A| \, \mid A \subseteq B \text{ and } A \text{ independent}\}$$ 
This function is monotonic, that is for all subsets $X$ and $Y$, $X \subseteq Y \Rightarrow r(X) \leq r(Y)$.
It is also submodular, that is for all $A$ and $B$: 
$$\rank(A \cup B) + \rank(A \cap B) \leq \rank(A) + \rank(B) $$
and it even leads to a characterization of matroids:
\begin{proposition}
Let $S$ be a finite set and $r$ a function from $\mathcal{P}(S)$ to $\mathbb{Z}$. The function $r$ is the rank of a matroid if and only if it is submodular,
 monotonic and such that the rank of any element is $0$ or $1$.  
\end{proposition}

One can also define a matroid by the collection of its circuits.

\begin{proposition} Let $S$ be a finite set and let $\mathcal{C}$ be a subset of $\mathcal{P}(S)$.
  Then $\mathcal{C}$ is the set of circuits of a matroid if and only if it satisfies the following axioms:
\begin{enumerate}
\item $\emptyset \notin \mathcal{C}$
\item If $C_{1},C_2 \in \mathcal{C}$ and $ C_{1} \subseteq C_{2}$ then $C_{1}= C_2$
 \item If $C_{1},C_{2} \in \mathcal{C}$, $C_1 \neq C_2$ and $e \in C_{1} \cap C_{2}$ then there is $ C_3 \in \mathcal{C}$
 such that $C_3 \subseteq C_{1} \cup C_{2} \setminus\left\lbrace e \right\rbrace $ 
\end{enumerate}
\end{proposition}

\subsection{Branch Decomposition}

In this subsection we define the branch-width of a matroid, thanks to the more general notion of connectivity function,
which also allows us to define the branch-width of a graph.
We follow the presentation of \cite{Logic_graph_algorithm}.

Let $S$ be a finite set and $\kappa: 2^{S} \rightarrow \mathbb{N}$.
The function $\kappa$ is symmetric if $\kappa(B) = \kappa(S \setminus B)$ for all $B \subseteq S$.
If $\kappa$ is symmetric and submodular, it is a \textit{connectivity function}.

A branch decomposition of $(S,\kappa)$ is a pair $(T,l)$ where $T$ is a binary tree and $l$ is a one to one labeling of the leaves
of $T$ by the elements of $S$.
We define the mapping $\tilde{l}$, from the vertices of the graph to the sets of $S$ recursively:

 $$\tilde{l}(t) = \left\lbrace  \begin{array}{l l}
 \left\lbrace l(t)\right\rbrace  &     \text{ if } t \text{ is a leaf}\\
 \tilde{l}(t_{1}) \cup \tilde{l}(t_{2}) & \text{ if } t \text{ is an inner node with children } t_{1}, t_{2}\\
  \end{array}\right.$$

The width of the branch decomposition of $(S, \kappa)$ is defined by
$$width(T, \tilde{l})= \max \left\lbrace \kappa(\tilde{l}(t)) \mid t \in V(T) \right\rbrace $$

The branch-width of $(S,\kappa)$ is the minimum of the width over all branch decompositions.
Thanks to a result of Iwata, Fleischer and Fujishige \cite{iwata2001csp} about minimalization of a submodular function, 
we know that we can find an almost optimal branch decomposition with a fixed parameter tractable algorithm.

\begin{theorem}[Oum and Seymour \cite{oum2006acw}]
\label{oum}
For any given $k$, there is an algorithm as follows. It takes as input a finite set 
$S$ and $\kappa$ a polynomial time computable connectivity function such that $\kappa(\{v\}) \leq 1$.
The algorithm concludes in polynomial-time either that $bw(S,\kappa) > k$
or outputs a branch decomposition of $(S,\kappa)$ of width at most $3k+1$.
\end{theorem}

We now define a connectivity function adapted to the matroid case.
 Let $M$ be a finite matroid with ground set $S$ and let $B$ be a set of elements of $M$.
We define the connectivity function by $\kappa(B) = \rank(B) + \rank(S \setminus B) - \rank(S)$.
The function $\kappa$ is symmetric by construction and submodular because the rank function is submodular.

In this article, we restrict our study of branch-width to representable matroids.
It means that $M$ is given as a matrix $A$ over a field $\mathbb{F}$.
In this case, the rank relative to matroids is equal to the rank in the sense of linear algebra.
Moreover, the rank of a family of column vectors $B$ is the dimension of the vector subspace it generates, denoted by $<B>$.

The following holds, where $+$ is the sum of vector spaces:
 $$\dim(<S>)= \dim(<B> + <S \setminus B>).$$
Therefore, by a classical theorem on the dimension of the sum of two vector
spaces, we obtain :
$$\dim(<S>) = \dim(<B>) + \dim(<S \setminus B>) - \dim (<B> \cap <S \setminus B>).$$
We replace $\dim(<S>)$ by this expression in the definition of $\kappa$ to obtain:
$$\kappa(B) = \dim (<B> \cap <S \setminus B>).$$

Let $(T,l)$ be a branch decomposition of width $t$ of $M$ and let $s$ be a node of $T$.
In this article, we note $T_{s}$ the subtree of $T$ rooted in $s$ and $E_{s}$ the vector subspace generated by $\tilde{l}(s)$, that is to say the set of leaves of $T_{s}$.
  Let $E_{s}^{c}$ be the subspace generated by $S  \setminus \tilde{l}(s)$ i.e. the set of leaves which do not belong to $T_{s}$.
Let $B_{s}$ be the subspace $E_{s} \cap E_{s}^{c}$, it is the boundary between what is described inside and outside of $T_{s}$.

\begin{remark}
We have seen that $\kappa(\tilde{l}(s)) = \dim(E_{s} \cap E_{s}^{c})$ which is equal by definition to $\dim(B_{s})$.
If $t$ is the width of the branch decomposition $(T,l)$, for all nodes $s$ of $T$, $ \dim B_{s} \leq t $.
\label{dimension}
\end{remark}

\begin{figure}
\begin{flushleft}
\begin{tabular}{c c c}
$
X =
\left( \begin{array}{cccccc}
1 & 1 & 0 & 0 & 1 & 1\\
0 & 1 & 0 & 1 & 1 & 0\\
1 & 1 & 1 & 0 & 0 & 0\\
0 & 1 & 0 & 0 & 0 & 0
\end{array} \right)
$&
\parbox[l]{4cm}{\includegraphics[scale=0.3]{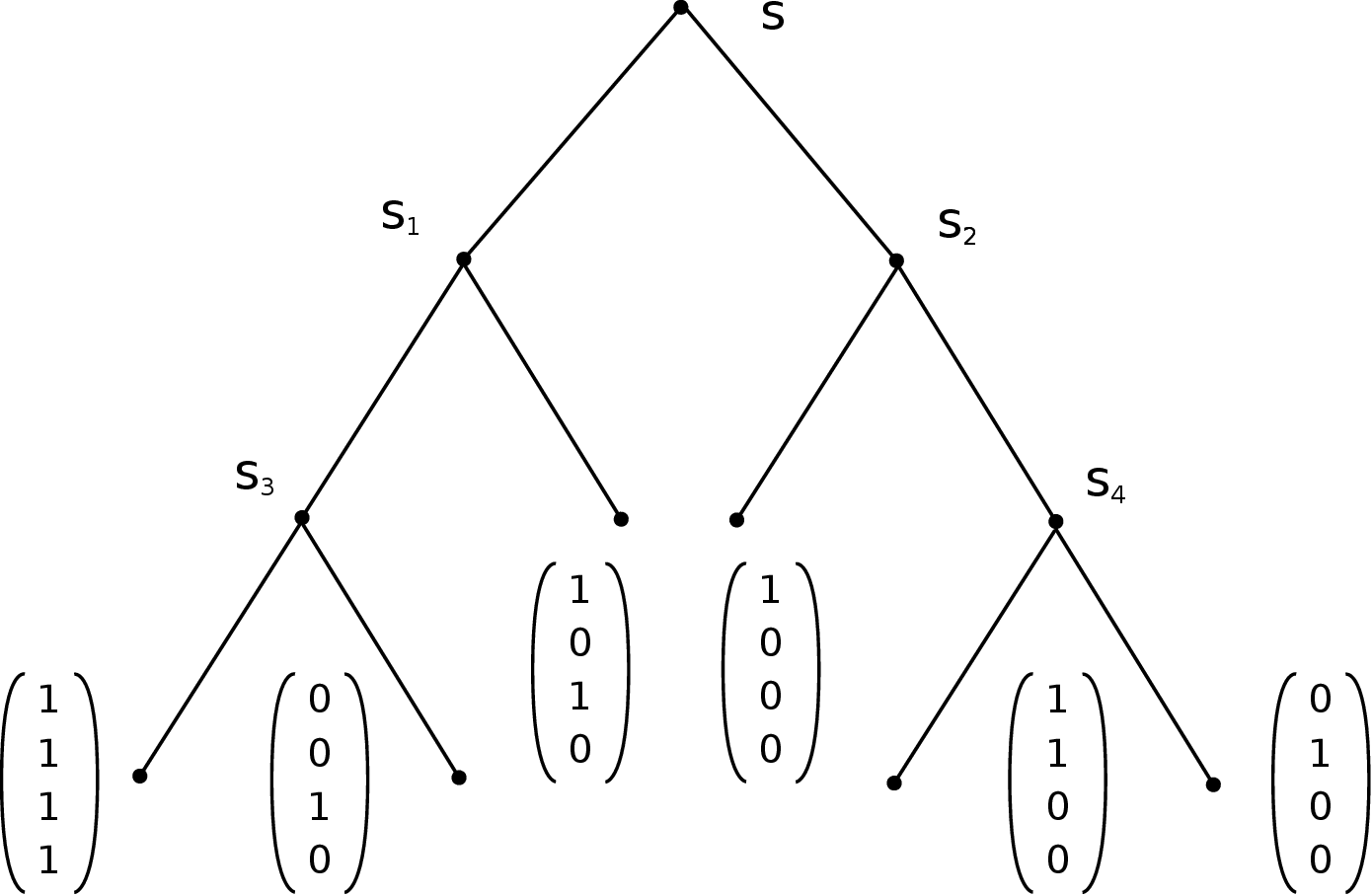}}
\end{tabular}
\caption{A matrix $X$ and one of its branch decomposition of width $1$}
\label{fig:tree}
\end{flushleft}
\end{figure}

\begin{example}
 To illustrate this notion, we compute $E_{s_{1}}$ and $E_{s_{1}}^{c}$ to find $B_{s_{1}}$ in the tree of
 Fig. \ref{fig:tree}. Notice that, when $s$ is a leaf, the subspace $E_{s}$ is generated by the single vector $l(s)$.
Therefore $B_s = E_{s} \cap E_{s}^{c}$ is either equal to $E_{s}$ or trivial, i.e. equal to the zero vector, as in the case of the left child of $s_{3}$
in Fig. \ref{fig:tree}.
 \medskip
 
 \noindent
 \begin{footnotesize}
$E_{s_{1}} = < \left(\begin{array}{c}1\\1\\1\\1\end{array}\right),\left(\begin{array}{c}0\\0\\1\\0\end{array}\right),\left( \begin{array}{c}1\\0\\1\\0\end{array}\right)>$; 
$E_{s_{1}}^{c} = <\left(\begin{array}{c}1\\0\\0\\0\end{array}\right),\left(\begin{array}{c}0\\1\\0\\0\end{array}\right)>$;  $B_{s_{1}}= <\left(\begin{array}{c}1\\0\\0\\0\end{array}\right)>$ 
\end{footnotesize}
\end{example}

In the case of a matroid representable on a finite field, we have the following result, similar to Theorem \ref{oum},
which gives an exact decomposition algorithm.

\begin{theorem}[Hlin\v{e}n\'y and Oum \cite{hlineny2008finding}]
  Let $k$ be a fixed integer and let $\mathbb{F}$ be a fixed finite field.
There is an algorithm which given as input an $\mathbb{F}$-matroid $M$ outputs in cubic time (parametrized by $k$ and $|\mathbb{F}|$),
a branch-decomposition of $M$ of width at most $k$, or confirms that $bw(M) > k$.
\end{theorem}

\section{Enhanced Branch Decomposition Tree}

From now on, all matroids will be representable over a fixed finite field $\mathbb{F}$.
The results of the next part are false if $\mathbb{F}$ is not finite, see \cite{hlinveny2006bwp}.
As we do not know how to decide in polynomial time if a matroid is representable, when we say it is,
we assume that it has been given as a matrix. Furthermore, to simplify the presentation,
we assume that the matroids have no loops, but this condition could easily be lifted.

Let $t$ be a fixed parameter representing the maximal branch-width of the considered matroids.
Let $M$ be a matroid represented by the matrix $A$ over $\mathbb{F}$ and $(T,l)$ one of its branch decomposition 
 of width at most $t$. We will sometimes not distinguish a leaf $v$ of $T$ from the column vector $l(v)$ it represents.
 Let $E$ be the vector space generated by the column vectors of $A$, we suppose that its dimension is the same 
as the length of the columns of $A$ and we denote it by $n$. 

We now build, for each node $s$, a matrix $C_s$. The construction is by bottom-up induction, that is from leaves to root. 
The column vectors of this matrix are elements of $E$ and they are partitioned in three parts
which are bases of subspaces of $E$.
If $s$ is a leaf, $C_{s}$ is a base vector of the subspace $B_{s}$.
If $s$ has two children $s_{1}$ and $s_{2}$, the matrix $C_{s}$ is divided in three parts $\left( C_{1}| C_{2}| C_{3} \right) $
where $C_{1}$, $C_{2}$ and $C_{3}$ are bases of $B_{s_{1}}$, $B_{s_{2}}$ and $B_{s}$ respectively.
By induction hypothesis, one already knows the bases of $B_{s_{1}}$ and $B_{s_{2}}$ used to build $C_{s_{1}}$ 
and $C_{s_{2}}$ and we choose them for $C_{1}$ and $C_{2}$. We then choose any base of $B_{s}$ for $C_{3}$.

Matrices $C_{s}$ are of dimension $n\times t_{1}$, with $t_{1} \leq 3t$, because of Remark \ref{dimension} on the dimension of boundary subspaces. 
A \emph{characteristic matrix} at $s$ is obtained by selecting a maximal independent set of rows of $C_{s}$ by Gaussian elimination.
The result is a $t_{2}\times t_{1}$ matrix $N_{s} = \left(N_{1}| N_{2}|N_{3}\right)$ with $t_{2} \leq 3t$. 

The vectors in $N_{s}$ still represent the bases of  $B_{s_{1}}$, $B_{s_{2}}$ and $B_{s}$ in the same order but they only carry the dependence information. 
In fact, any linear dependence relation between the columns of $C_{s}$ is a linear dependence relation between the same columns of $N_{s}$ with the same coefficients, and conversely.  
Note that the characteristic matrix at a node is not unique. It depends on the choice of bases used to represent the subspaces $B_{s}$ 
and on the rows which have been removed by Gaussian elimination.

\begin{definition}[Enhanced branch decomposition tree]
Let $M$ be a $\mathbb{F}$-matroid and let
$(T,l)$ be one of its branch decomposition of width $t$. 
Let $\tilde{T}$ be the tree $T$ labeled at each node by a characteristic matrix obtained by the previous construction.
We say that $\tilde{T}$ is an enhanced branch decomposition tree of $M$ of width $t$ (enhanced tree for short). 
\end{definition}

Each label can be represented by a word of size polynomial in $t$. This is the reason why 
the matrix $N$ has been chosen instead of $C$ which is of size linear in the matroid.
Indeed, the labels later appear in a formula whose size must depend only in $t$.
Note also that the leaves of the enhanced tree are in bijection with the elements of the matroid,
by the same function $l$ as for the branch decomposition tree.

Remark that, given a $n\times m$ matrix and a branch decomposition tree of width $t$, one can transform this tree into an enhanced tree in cubic time.
The transformation of the matrix $C$ into $N$ only takes a linear time, since $C$ is of size at most $3tn$. 
Assume that the matrix $A$ we are working with has been given in the normal form $(I|X)$ where $I$ is the identity matrix.
If not, it is always possible to compute such a normal form in cubic time.
We must now build a matrix $C$ for each node $s$ of the tree,
that is to find a base of the boundary space $B_s$. 
To do that we must compute the intersection of the vector space spanned by some columns of $A$ with the one spanned by the other ones.
Since $A$ is in normal form and that the intersection we compute is of dimension less than $t$, it can be done in quadratic time.

\begin{example}
 Figure \ref{fig:enhanced} represents an enhanced tree constructed from the branch decomposition tree of Figure \ref{fig:tree}.
Some of the intermediate computations needed to find it are also given for illustration.
One may check that each label $N_{s}$ of the tree is obtained by Gaussian elimination from $C_{s}$.
Remark that, as it is a decomposition of branch-width $1$,
the subspaces $B_{s}$ are of dimension $1$ and are thus represented here by one vector.

\begin{figure}
 \begin{flushleft}
\begin{tabular}{c c}
 \parbox[c]{6cm}{\includegraphics[scale=0.3]{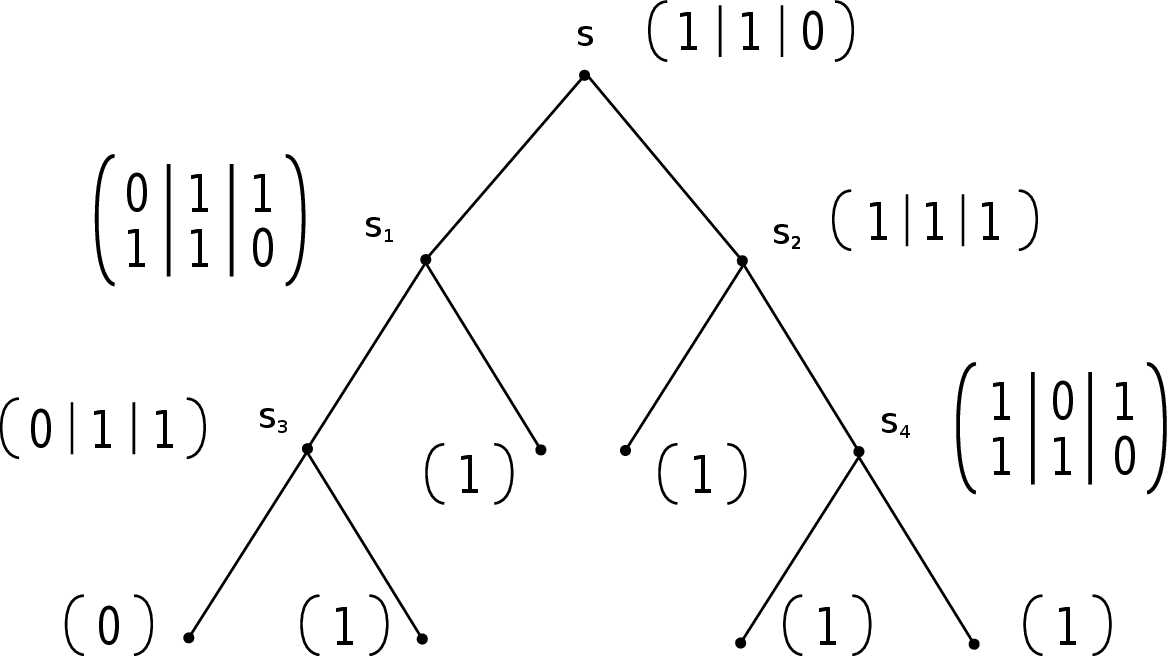}}&
$
\begin{array}{c c}
B_{s_{3}} = <\left( \begin{array}{c} 0\\ 0\\ 1\\0\end{array}\right)>  & C_{s_{3}}= \left( \begin{array}{c | c | c} 0&0&0\\0&0&0\\0&1&1\\0&0&0\end{array}\right)\\
\\
B_{s_{4}} = <\left( \begin{array}{c} 1\\ 0\\ 0\\0\end{array}\right)>  & C_{s_{4}}= \left( \begin{array}{c | c | c} 1&0&1\\1&1&0\\0&0&0\\0&0&0\end{array}\right)
\end{array} $
\\
\\ $B_{s_{1}} = <\left( \begin{array}{c} 1\\ 0\\ 0\\0\end{array}\right)> \, C_{s_{1}}= \left( \begin{array}{c | c| c}0&1&1\\0&0&0\\1&1&0\\0&0&0 \end{array}\right)$&
$B_{s_{2}} = <\left( \begin{array}{c} 1\\ 0\\ 0\\0\end{array}\right)>  \, C_{s_{2}}= \left( \begin{array}{c | c | c} 1&1&1\\0&0&0\\0&0&0\\0&0&0\end{array}\right)$
\end{tabular}

 \caption{An enhanced tree built from the branch decomposition tree of Figure \ref{fig:tree}.}
\label{fig:enhanced}
\end{flushleft}
\end{figure}
\end{example}

\section{Model-checking and Enhanced Tree}
\label{sec:rep_matroid}
\subsection{Signature}

We show how dependent sets of a matroid of bounded branch-width can be characterized using its enhanced tree.
This will later allow us to define the dependence predicate by a formula in $MSO$. 
To this aim, for any node $s$ of an enhanced tree, we succintly represent by a \textit{signature} each element of $B_{s}$ which can be generated by a given set
of elements of the matroid.

\begin{definition}[Signature]
A signature is a finite sequence of elements of $\mathbb{F}$, denoted by $\mathbf{\lambda} =  (\lambda_{1}, \dots, \lambda_{l}) $
or by $\varnothing$ when it is of length $0$.
\end{definition}

\begin{definition}[Signatures of a set]
 Let $A$ be a matrix representing a matroid and $T$ one of its enhanced tree.
Let $s$ be a node of $T$ and let $X$ be a subset of the leaves of $T_{s}$ which are seen as columns of $A$.
Let $v$ be an element of $B_{s}$, obtained by a nontrivial\footnote{at least one of the coefficient of the linear combination is not zero} linear combination 
of elements of $X$. Let $c_{1}, \dots, c_{l} $ denote the column vectors of the third part of $C_{s}$. They form a base of $B_{s}$.
Thus there is a signature $\lambda= (\lambda_{1}, \dots, \lambda_{l})$ such that $v = \displaystyle{\sum_{i=1}^{l} \lambda_{i} c_{i}}$.
We say that $X$ admits the \textit{signature} $\lambda$ at $s$.
The set $X$ also always admits the signature $\varnothing$ at $s$.
\end{definition}

The size $l$ of a signature at a node $s$ is the dimension of $B_s$, thus it is at most $t$, the width 
of the branch decomposition used to build $T$. Notice also that 
a set admits a lot of different signatures, in fact they form a vector subspace of $\mathbb{F}^{l}$ (without maybe the zero vector).
\begin{example}
\label{leaf}
 We illustrate the previous definition in the case of a leaf $s$.\\
Case $1$: the set $X$ is empty, therefore there is no combination of its elements
and the only signature it admits at $s$ is $\varnothing$.\\
Case $2$: the set $X= \{x\}$ and the label of $s$ is the matrix $(0)$. 
The space  $B_s$ is the zero vector and there is no nontrivial combination of $x$ equals to $0$,
 therefore the only signature $X$ admits at $s$ is $\varnothing$.\\
Case $3$:  the set $X= \{x\}$ and the label of $s$ is the matrix $(\alpha)$ with $\alpha \neq 0$. 
The set $B_s$ is hence generated by $x$ and the set $X$ admits the signatures 
 $(\lambda)$ for all $\lambda \neq 0 \in \mathbb{F}$ and $\varnothing$.
\end{example}

We now define a relation which describes how the signature a node admits is related to the signatures its children admit.
\begin{definition}\label{def:relation}
 Let $N$ be a matrix over $\mathbb{F}$ divided in three parts $\left(N_{1}| N_{2}|N_{3}\right) $, and let
$\lambda$, $\mu$, $\delta$ be three signatures over $\mathbb{F}$. The submatrix $N_i$ has $l_i$ columns,
and its $j^{\text{th}}$ vector is denoted by $N_i^{j}$.
The relation $R(N, \lambda,\mu,\delta)$ is true if: 
\begin{itemize}
 \item $\lambda=\mu=\delta= \varnothing$ or
 \item $\lambda$ and at least one of $\mu$, $\delta$ are not $\varnothing$ and the following equation holds
\begin{equation}
 \displaystyle{\sum_{i=1}^{l_1} \mu_{i} N_{1}^{i} + \sum_{j=1}^{l_2} \delta_{j} N_{2}^{j}= \sum_{k=1}^{l_3} \lambda_{k} N_{3}^{k}}
\label{eq}
\end{equation}
\noindent
If a signature is $\varnothing$, the corresponding sum in Eq. \ref{eq} is replaced by $0$.
\end{itemize}
\end{definition}

\begin{lemma}
\label{build}
Let $T$ be an enhanced tree, $s$ one of its nodes with children $s_{1}$, $s_{2}$
 and $N$ the label of $s$.
Let $X_{1}$ and $X_{2}$ be two sets of leaves chosen amongst the leaves of $T_{s_{1}}$ and $T_{s_{2}}$ respectively.
If $X_1$ admits $\mu$ at $s_{1}$, $X_{2}$ admits $\delta$ at $s_{2}$ and 
 $R(N, \lambda,\mu,\delta)$ holds then $X = X_{1} \cup X_{2}$ admits $\lambda$ at $s$. 
\end{lemma}

\begin{proof}
The case $\lambda=\mu=\delta= \varnothing$ is obvious.
By construction of $N$, we know that $N_{1}$, $N_{2}$ and $N_{3}$ represent the bases $C_{1}$, $C_{2}$ and $C_{3}$ of  $B_{s_{1}}$, $B_{s_{2}}$ and
$B_{s}$ respectively, meaning that they satisfy the same linear dependence relations. Then Equation \ref{eq} implies $$\displaystyle{\sum_{i=1}^{l_1} \mu_{i} C_{1}^{i} + \sum_{j=1}^{l_2} \delta_{j} C_{2}^{j} = \sum_{k=1}^{l_3} \lambda_{k} C_{3}^{k} }$$ 
We assume without loss of generality that $X_1$ admits a signature $\mu \neq \varnothing$. Therefore, there is a nontrivial linear combination of elements of $X_{1}$ equals
to  $\displaystyle{\sum_{i=1}^{l_1} \mu_{i} C_{1}^{i}}$.
The set $X_2$ admits the signature $\delta$, thus there is a linear combination of elements of $X_{2}$ equal to $\displaystyle{ \sum_{j=1}^{l_2} \delta_{j} C_{2}^{j}}$.
By summing the two linear combinations, we obtain a nontrivial linear combination of elements of 
$X_{1} \cup X_{2}$ equal to $\displaystyle{\sum_{i=1}^{l_1} \mu_{i} C_{1}^{i} + \sum_{j=1}^{l_2} \delta_{j} C_{2}^{j}}$  which is equal to $\displaystyle{\sum_{k=1}^{l_3} \lambda_{k} C_{3}^{k}}$ by the previous equality. 
\end{proof}

\begin{lemma}
\label{unbuild}
 Let $T$ be an enhanced tree, $s$ one of its nodes with children $s_{1}$, $s_{2}$
 and $N$ the label of $s$.
Let $X_{1}$ and $X_{2}$ be two sets of leaves chosen amongst the leaves of $T_{s_{1}}$ and $T_{s_{2}}$ respectively.
If $X = X_{1} \cup X_{2}$ admits $\lambda$ at $s$, then there are two signatures $\mu$ and $\delta$
such that $R(N, \lambda,\mu,\delta)$ holds, $X_1$ admits $\mu$ at $s_{1}$ and $X_{2}$ admits $\delta$ at $s_{2}$.
\end{lemma}

\begin{proof}
If $\lambda = \varnothing$, then the choice of $\mu=\delta= \varnothing$ proves the lemma.
Assume now that $X$ admits $\lambda \neq \varnothing$, hence there is a nontrivial linear combination of elements in $X$ equal 
to $v = \displaystyle{\sum_{k=1}^{l_3} \lambda_{k} C_{3}^{k}}$.
We can divide the linear combination of elements in $X$ into a sum of element in $X_1$ equal to $v_1$
and a sum of elements in $X_2$ equal to $v_2$ with $v = v_1 +v_2$.
At least one of those combinations is nontrivial, we assume it is the one equal to $v_1$.

Since $v_{1} = v - v_{2}$ and $v \in B_{s}$, we have $v_{1} \in \,\,<E_{s_{2}} \cup B_{s}>$. 
Moreover $B_{s} \subseteq E_{s}^{c} \subseteq E_{s_{1}}^{c}$ and $E_{s_{2}} \subseteq E_{s_{1}}^{c}$ then $v_{1} \in E_{s_{1}}^{c}$. 
Hence we have proven that $v_{1}$ is in $E_{s_{1}} \cap E_{s_{1}}^{c} = B_{s_{1}}$.
The vectors $C_{1}^{i}$ are a base of $B_{s_{1}}$, so that there is $\mu = (\mu_1,\dots,\mu_{l_1}) \neq \varnothing$ such that
$v_{1} = \displaystyle{ \sum_{i=1}^{l_1} \mu_{i}C_{1}^{i}}$. 
It means that  $X_{1}$ admits the signature $\mu$ at $s_1$.
Since $X_{2}$ plays a symmetric role, the same demonstration proves that it admits the signature $\delta =(\delta_{1}, \dots, \delta_{l_{2}})$  
at ${s_{2}}$ such that $v_{2} = \displaystyle{ \sum_{j=1}^{l_2} \delta_{j}C_{2}^{j}}$. 

Finally we have $\displaystyle{\sum_{i=1}^{l_1} \mu_{i} C_{1}^{i} + \sum_{j=1}^{l_2} \delta_{j} C_{2}^{j} = \sum_{k=1}^{l_3} \lambda_{k} C_{3}^{k} }$ and we can replace the columns of $C_{i}$ by those of $N_{i}$,
 which proves that $R(N, \lambda,\mu,\delta)$ holds.
\end{proof}

We then derive a global result on enhanced trees and signatures.

\begin{lemma}
  Let $A$ be a matrix representing a matroid and $T$ one of its enhanced tree.
Let $X$ be a set of leaves of $T$, then $X$ admits the signature $\lambda$ at the node $u$ 
if and only if there exists a signature $\lambda_{s}$ for each node $s$ of the tree $T$ such that:
  \begin{enumerate}
   \item for every node $s$ labeled by $N$ with children $s_{1}$ and $s_{2}$, $R(N, \lambda_s,\lambda_{s_1},\lambda_{s_2})$ holds.
   \item for every leaf $s$, $\lambda_{s} \neq \varnothing$ only if $s$ is in $X$ and $s$ is labeled by the matrix $(\alpha)$ 
with $\alpha \neq 0$.
     \item $\lambda_{u} = \lambda$.
  \end{enumerate}
\label{signature}
\end{lemma}

\begin{proof}
 The proof is by induction on the height of $s$ in $T$.
If $u$ is a leaf of $T$, the equivalence is true 
because of the second condition and  Example \ref{leaf}.

Assume now that $u$ is an internal node labeled by $N$ and with children $s_1$ and $s_2$.
The induction hypothesis and the conditions $1$ and $3$ enable us to use 
the Lemmas \ref{build} and \ref{unbuild} to prove both sides of the equivalence.
\end{proof}

The following theorem is the key to the next part, it shows that testing dependence of a set can be done by checking local 
constraints on signatures. 

\begin{theorem}[Characterization of dependence]
\label{finale}
  Let $A$ be a matrix representing a matroid, $T$ one of its enhanced tree and $l$ the bijection
between the leaves of $T$ and the columns of $A$. Let $X$ be a set of leaves of $T$,
  then $l(X)$ is dependent if and only if there exists a signature $\lambda_{s}$ for each node $s$ of the tree $T$ such that:
  \begin{enumerate}
   \item for every node $s$ labeled by $N$ with children $s_{1}$ and $s_{2}$, $R(N, \lambda_s,\lambda_{s_1},\lambda_{s_2})$ holds.
   \item for every leaf $s$, $\lambda_{s} \neq \varnothing$ only if $s$ is in $X$ and $s$ is labeled by the matrix $(\alpha)$ 
   with $\alpha \neq 0$.
   \item the signature at the root is $(0,\dots,0)$
  \end{enumerate}

\end{theorem}

\begin{proof}
If a set $X$ admits the signature $(0,\dots,0)$ at the root,
it means that there is a nontrivial linear combination of the elements of $l(X)$ equal to $0$.
It is therefore equivalent for $l(X)$ to be a dependent set and for $X$ to admit the signature $(0,\dots,0)$ at the root of $T$.
The proof of the theorem follows from this remark and Lemma \ref{signature} applied at the root.
\end{proof}

\subsection{Monadic second-order logic over terms and matroids}

\paragraph{Terms}
A functional signature is a pair $(F,A)$, where $F$ is a finite set of function symbols of positive arity
and $A$ is a finite set of constants. We denote by $T(F,A)$ the set of terms built over $(F,A)$.
Note that a term can be seen as a ranked tree of bounded degree: each internal node is labeled by an element of $F$, each leaf by an element of $A$.
In this article all the terms/trees are binary.

The terms of $T(F,A)$ are represented by a relational structure whose domain
is the set of nodes of the term. The structure has the binary relations $lchild(x,y)$ and
$rchild(x,y)$ which are true when $y$ is the left child, respectively the right child, of $x$.
We also have one unary relation for each symbol in $F$ and $A$, denoted by $label(s) = e$, which holds when 
$e$ is the label of the node $s$.

We recall the definition of monadic second-order logic, 
here given over terms, i.e. the atoms are made from the relations of the structure which represents a term.
The particularity of this logic is to use two types of variables.
A first order variable (in lower case) represents an element of the domain,
whereas a second-order variables (in upper case) represents a subset of elements
of the domain.

\begin{definition}
One builds atomic formulas from first and second-order variables and from
the relations $=$, $\in$, $rchild(x,y)$, $lchild(x,y)$ and  $label(s) = e$ for all $e$ of $A \cup F$.
The set of \textbf{M}onadic \textbf{S}econd \textbf{O}rder formulas, denoted by $MSO$,
is the closure of these atomic formulas by the usual quantifiers $\exists$, $\forall$ and
 the logical connectives $\wedge$, $\vee$ and $\neg$.
\end{definition}

The equality is the equality over the elements of the domain, but we extend it to 
sets, since it is definable by a simple formula. The relation $x \in X$ means that the element 
denoted by $x$ is a member of the set denoted by $X$. We also use freely 
$\neq$ and $\subseteq$ over elements and sets since they are easily definable in $MSO$.
We can express by a $FO$ formula the fact to be the root or a leaf:
$$root(s)\equiv \forall x \neg (lchild(x,s) \vee rchild(x,s))$$
$$leaf(s)\equiv \forall x \neg (lchild(s,x) \vee rchild(s,x))$$

One can decide if an $MSO$ formula holds over a term by building an appropriate tree automaton and running it on the term.
This yields the following classical theorem.

\begin{theorem}[Thatcher and Wright \cite{thatcher1968gfa}]
\label{folklore}
The model-checking of $MSO$ formulas over terms is solvable by a fixed parameter linear algorithm, 
the parameter is the sum of the size of the formula and the
size of the functional signature on which the terms are defined. 
\end{theorem}

\paragraph{Matroids}

A matroid is represented by a structure whose domain is the ground set of elements of the matroid.
There is a second-order relation in the structure, denoted by $indep(X)$, which holds when the set $X$ is independent in the matroid.
Notice that, since $indep$ is a set predicate, we are not in the usual framework of first-order relational structures.
We define the monadic second-order logic structure representing matroids, exactly as for terms,
but with the relations $=$, $\in$ and $indep$. This logic is denoted by $MSO_{M}$.
% It is somewhat similar to the logic $MS_2$ that is to say $MSO$ over graphs represented by an incidence structure\footnote{the domain is the set of edges and vertices and there is one binary relation to represent the fact that a vertex is in an edge}.
%However, it is less general since we cannot quantify over independent sets, which are the counterpart of edges in a graph.
We now give some properties definable in this logic.
For more details and examples, one may read \cite{hlineny2003matroid}.

The circuits are definable in $MSO_{M}$, $X$ is a circuit if and only if it satisfies:
$$circuit(X) \equiv \neg indep(X) \wedge \forall Y\left( Y \nsubseteq X \vee X = Y \vee indep(Y)\right) $$

We can also express that a matroid is connected, meaning that every pair of elements is in a circuit,
a notion similar to $2$-connectivity in graphs:
$$ \forall x,y\, \exists X\, x \in X \wedge y \in X \wedge circuit(X)$$

The axioms defining a matroid in term of circuits, given at the end of Sec. \ref{sec:matroid}, are also expressible in $MSO_M$.

One defines the notion of minor of a matroid
by using the restriction presented in Sec.\ref{sec:matroid} and an operation of contraction.
For any matroid $N$, one can write a formula $\psi_N$ of $MSO_{M}$ which is true on a matroid $M$ if and only if 
$N$ is a minor of the matroid $M$ (see \cite{hlineny2003matroid}). Therefore all classes of matroids defined by excluded minors, 
such as the matroids representable over $\mathbb{F}_2$ \cite{tutte1958homotopy}, $\mathbb{F}_3$ \cite{seymour1979matroid} or $\mathbb{F}_4$ \cite{geelen2000excluded},
 are also definable by an $MSO_{M}$ formula.\footnote{These examples are partial results towards Rota's conjecture, that is to prove that the matroids representable over a finite field
can be characterized by excluded minors for any finite field. }

One can express some properties about a graph by a formula over its cycle matroid.
For instance, one can check that a graph is Hamiltonian if and only if it has a cycle 
containing a spanning tree. This can be stated by the next formula, where $basis(X)$ 
is a formula which holds if and only if $X$ is a basis:
$$ \exists C \, circuit(C) \wedge \exists x \,basis(C \setminus \{x\})$$

In fact, it has been proven in \cite{hlinveny2006bwp} that any sentence about a loopless $3$-connected graph $G$
in $MS_2$ can be expressed as a sentence about its cycle matroid in $MSO_{M}$.
This property can be generalized to any graph, by considering the cycle matroid of $G\uplus K_3$
which is a disjoint union of $G$ and $K_3$ with all edges between the two graphs.

\subsection{From Matroids to Trees}
\label{formule}

The aim of this subsection is to translate $MSO_{M}$ formulas over a matroid into
$MSO$ formulas over its enhanced tree.
The main difficulty is to express the predicate $indep$ in $MSO$. To achieve that, we use  Theorem \ref{finale}
which reduces this property to an easily checkable condition on a signature at each node of the enhanced tree. 
This can be seen as finding an accepting run of a non deterministic automaton whose states are signatures.

The formula is defined for enhanced trees of width less than $t$ over a field $\mathbb{F}$ of size $k$. 
We have to encode in $MSO$ a signature $\lambda$ of size at most $t$ at each node of an enhanced tree.
These signatures are represented by the set $\vec{X}$ of set variables $X_{\lambda}$ indexed by all signatures $\lambda$ of size at most $t$.
The number of such variables is bounded by $(k+1)^{t}$, a constant because both the field and the branch-width are fixed.

 The relation $X_{\lambda}(s)$ holds if and only if $\lambda$ is the signature at $s$. 
The following formula states that there is one and only one value for the signature at each $s$.
 $$\displaystyle{\Omega(\vec{X}) \equiv \forall s \, \bigvee_{\lambda}\left( X_{\lambda}(s) \bigwedge_{\lambda' \neq \lambda}\neg X_{\lambda'}(s)\right)} $$

\begin{remark}
We could have defined the signature of a set as the union of all the signatures it admits, 
as we do in Sec. \ref{sec:terms}. The signature would then be unique, and our construction
would correspond to a deterministic automaton. But in this case, we would deal with $2^{k^{t}}$ possible
signatures, a number still bounded if $k$ and $t$ are fixed, but which is much larger and further decreases the practical interest 
of the algorithm we provide.

If we want to be more efficient and use less variables, we may encode in binary the value 
of each element $\lambda_{i} \in \mathbb{F}$ of a signature $\lambda$. We only need $\log(k)t$ variables 
to do so and it also spares us the formula $\Omega$ but it would obfuscate the presentation.
\end{remark}

The formula $dep(Y)$ that represents the negation of the relation $indep$ is now built in three steps corresponding
to the three conditions of Theorem \ref{finale}.

\begin{enumerate}
 \item 

The formula $\Psi_{1}$ ensures that the relation $R$ holds at every internal node.
It is a conjunction on all possible characteristic matrices $N$ and all signatures $\lambda$.

$$\Psi_{1}(\vec{X}) \equiv \forall s \, \neg leaf(s) \Rightarrow [ \exists s_{1}\, s_{2} \, lchild(s,s_{1}) \wedge rchild(s,s_{2})$$
$$ \displaystyle{\bigwedge_{\lambda_{1},\lambda_{2},\lambda,N}   \left( \left( label(s) = N \wedge  X_{\lambda_{1}}(s_{1}) \wedge  X_{\lambda_{2}}(s_{2}) \wedge  X_{\lambda}(s)\right) \Rightarrow  R(N,\lambda,\lambda_{1},\lambda_{2}) \right)]}$$
 
\item 
We define the formula  $\Psi_{2}(Y,\vec{X})$ which means that a leaf with a signature different from $\varnothing$ is 
in $Y$ and has a label different from the matrix $(0)$.
$$\Psi_{2}(Y,  \vec{X}) \equiv \forall s \left( leaf(s) \wedge \neg X_{\varnothing}(s)\right) \Rightarrow ( Y(s) \wedge label(s) \neq (0))$$
\item

$\Psi_{3}(\vec{X})$ states that the signature at the root is $(0,\dots,0)$.
$$ \Psi_{3}(\vec{X}) \equiv \exists s \, root(s) \wedge  X_{(0,\dots,0)}(s) $$
\end{enumerate}

Thanks to Theorem \ref{finale} we know that the following formula is true 
on an enhanced tree $T$ of a matroid $M$ if and only if $Y$ is a set of leaves of $T$ in bijection with a dependent set of $M$.
$$dep(Y) \equiv  \exists \vec{X}  \,\Omega(\vec{X}) \wedge \Psi_{1}( \vec{X} ) \wedge \Psi_{2}(Y, \vec{X})\wedge \Psi_{3}(\vec{X}) $$

The size of the formula $dep(Y)$ is  up to a constant factor the size of $\Psi_1$ which is a conjunction of less than $k^{9t^{2}+3t}$ terms of constant size
plus the size of $\Omega$ which is disjunction of $k^t$ terms of size $k^t$. Therefore, when $k$ and $t$ are fixed, $dep$ is of fixed size.

We now define by induction  a formula $F(\phi(\vec{x}))$ of $MSO$ from the formula $\phi(\vec{x})\in MSO_{M}$, by relativization to the leaves.
\medskip

\begin{itemize}
 \item if $\phi(\vec{x})$ is the relation $x=y$ or $x \in X$,  $F(\phi(\vec{x}))$ is the same relation 
 \item if $\phi(\vec{x})$ is the relation $indep(X)$, $F(\phi(\vec{x}))$ is the negation of the formula $dep(X)$ we have just defined
  \item if $\phi(\vec{x})$ is the formula $\psi(\vec{x}) \wedge \chi(\vec{x})$, $F(\phi(\vec{x}))$ is the formula $F(\psi(\vec{x})) \wedge F(\chi(\vec{x}))$ 
 \item if $\phi(\vec{x})$ is the formula $\exists y \psi(y)$, $F(\phi(\vec{x}))$ is the formula $\exists y (leaf(y) \wedge F(\psi(y)))$
 \item if $\phi(\vec{x})$ is the formula $\exists Y \psi(Y)$, $F(\phi(\vec{x}))$ is the formula $\exists Y [\forall y (y \in Y \Rightarrow leaf(y)) \wedge F(\psi(Y))]$
\end{itemize}
\medskip

Moreover, for every free first-order variable $y$ and every free second-order variable $Y$,
we take the conjunction of the relativized formula above with:
\begin{itemize}
 \item  $leaf(y)$ 
\item $\forall y (y \in Y \Rightarrow leaf(y))$
\end{itemize}
We can now state the main theorem:

\begin{theorem}
\label{main}
 Let $M$ be a $\mathbb{F}$-matroid of branch-width less than $t$, $T$ one of its enhanced tree 
and $l$ the bijection between the leaves of $T$ and the elements of $M$.
Let $\phi(\vec{x})$ be a $MSO_{M}$ formula with free variables $\vec{x}$,
we have $$(M,\vec{a}) \models \phi(\vec{x})  \Leftrightarrow (T,l(\vec{a})) \models F(\phi(\vec{x}))$$
\end{theorem}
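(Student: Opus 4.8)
The plan is to prove the equivalence by structural induction on the formula $\phi$. The atomic cases are the heart of the matter, and everything else is bookkeeping about relativization. First I would set up the translation $F$ more carefully: for a first-order variable $x$ ranging over matroid elements, $F$ introduces a variable ranging over tree nodes, and the relativization guard $leaf(x)$ is conjoined wherever $x$ is quantified; similarly a second-order variable $X$ over element-sets becomes a node-set variable guarded by $\forall z\,(z \in X \Rightarrow leaf(z))$. The bijection $f$ between leaves of $\bar T$ and elements of $M$ then extends to sets, and the claim is that under this correspondence the two structures satisfy the translated formulas in tandem.

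Next I would handle the atomic formulas. Equality of elements and of sets, and the membership relation $x \in X$, translate transparently: $f$ is a bijection, so $a = b \iff f(a) = f(b)$ and $a \in B \iff f(a) \in f(B)$, and on the tree side the guarded variables range exactly over leaves and sets of leaves, so the translated atoms mean the same thing. The only non-trivial atom is $indep(F)$. Here I would invoke Theorem~\ref{finale}: $F$ is \emph{dependent} in $M$ iff $f(F)$ satisfies the syntactic condition $dep(\cdot)$ on $\bar T$, because the theorem exactly characterizes dependent sets by the existence of a consistent family of signatures satisfying conditions (1)--(3), and the formula $dep(X)$ — via $\Omega$, $\Psi_1$, $\Psi_2$, $\Psi_3$ — is a faithful $MSO$ rendering of precisely that existential statement over signature-valued labellings of the nodes. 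Hence $indep(F)$ holds in $M$ iff $\neg dep(f(F))$ holds in $\bar T$, which is what $F(indep(F))$ asserts.

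For the inductive step, the Boolean connectives $\wedge,\vee,\neg$ commute with $F$ trivially. For quantifiers, $(M,\vec a) \models \exists x\,\psi$ iff there is an element $b$ with $(M,\vec a, b)\models \psi$; by the induction hypothesis this holds iff there is a leaf $f(b)$ of $\bar T$ with $(\bar T, f(\vec a), f(b)) \models F(\psi)$, and since $f$ biject\-s elements with leaves, this is exactly $(\bar T, f(\vec a)) \models \exists x\,(leaf(x) \wedge F(\psi))$, i.e.\ $F(\exists x\,\psi)$. The second-order case is identical, using that $f$ induces a bijection between subsets of the ground set and sets of leaves of $\bar T$, so the guarded quantifier $\exists X\,(\forall z (z\in X \Rightarrow leaf(z)) \wedge F(\psi))$ ranges over exactly the images of element-sets. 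The universal cases follow by duality.

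The main obstacle is the $indep$ atom, and more precisely making rigorous that the $MSO$ formula $dep(X)$ on $\bar T$ is logically equivalent to the combinatorial statement of Theorem~\ref{finale}. One must check that $\Omega$ forces the variables $\vec{X_\lambda}$ to encode a genuine function assigning one signature to each node; that $\Psi_1$ correctly captures Equation~\ref{eq} at every internal node (including the boundary convention that $\theta$ is false when a signature overshoots the number of columns of the relevant block, which matches the $\varnothing$/$(1)$ bookkeeping at leaves handled by $\Psi_2$); and that $\Psi_3$ pins the root signature to $(0,\dots,0)$, which by the remark in the proof of Theorem~\ref{finale} is equivalent to $X$ being dependent. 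Since the finite field $\mathbb{F}$ and the width bound $t$ are fixed, there are only finitely many possible characteristic matrices $N$ and signatures $\lambda$, so all the conjunctions and disjunctions in $\Psi_1$, $\Omega$ are finite and the formula $dep(X)$ is a legitimate fixed $MSO$ formula; this finiteness is exactly what the earlier sections were set up to guarantee, and I would point back to that. Once this equivalence is in hand, the induction goes through routinely.
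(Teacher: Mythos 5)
Your proof is correct and follows essentially the same route as the paper: a structural induction on $\phi$ in which the relativization to leaves via the bijection $f$ handles equality, membership, and the quantifier cases, and the only substantive step --- the atom $indep$ --- is discharged by Theorem~\ref{finale} through the formula $dep(X)$. The paper's own proof is just a one-line statement of this plan, so your version is simply a more detailed elaboration of the same argument.
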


\begin{proof}
 The demonstration is done by induction, every case is trivial except the translation of the
 predicate $indep$ whose correctness is given by Theorem \ref{finale}.
\end{proof}

Suppose we have a formula $\phi$ of $MSO_{M}$ and a representable matroid $M$ of branch-width $t$.
We know that we can find a branch-width decomposition of width equal to $t$ in cubic time \cite{hlineny2008finding}.
Furthermore, we can build from it an enhanced tree of $M$ in cubic time. By Theorem \ref{main}, we know that we need only to decide 
the formula $F(\phi)$ on the enhanced tree to decide $\phi$ on $M$, which is done in linear time by Theorem \ref{folklore}.
We have, as a corollary, the main result of \cite{hlinveny2006bwp}.

\begin{corollary}[Hlin\v{e}n{\'y} \cite{hlinveny2006bwp}]
 The model-checking problem of $MSO_{M}$ formulas is decidable in time $f(t,k,l) \times n^{3}$ over the set of  $\mathbb{F}$-matroids given by a matrix,
where $n$ is the number of elements in the matroid, $t$ is its branch-width, $k$ is the size of  $\mathbb{F}$, $l$ is the size of the formula and $f$ is a computable function.
\end{corollary}

Since we can decide dependence in a represented matroid of bounded branch-width in linear time by only using one of its enhanced tree,
the enhanced trees are a way to describe completely a matroid and then to represent it. Moreover, 
this representation is compact, since the size of an enhanced tree is $O(t^{2}\times n)$, where $n$
is the size of the ground set of the matroid, while the matrix which usually defines it, is of size  $O(n^{2})$.

\section{Extensions and Applications}

In this section, we present generalizations of the result of the previous section,
by an extension of the model or of the language.
As an application, we show that Theorem \ref{main} can be used to solve enumeration problems in a more efficient way.

\subsection{Logical extension}

\paragraph{Colored matroids}

 We can work with colored matroids, meaning that we add a fixed number of unary predicates to the language
 which are interpreted by subsets of the ground set.
Theorem \ref{main} still holds for colored matroids except that we now have colored trees, on which the decision problem for
$MSO$ is still in linear time.

 Let $\textsc{A-Circuit}$ be the problem to decide, given a matroid $M$ and a subset $A$ of its elements, if there is a circuit in which $A$ is included.
This problem is interesting, since when $|A|=1$ and the matroid is representable over a finite field,
a circuit extending $A$ is a minimal solution (for inclusion of the support) of a linear system.
If the field is $\mathbb{F}_{2}$, a circuit extending $A$ is a minimal solution (for the pointwise order) of an affine formula.
 It is an affine variation of the circumscription problem for propositional formulas studied in artificial intelligence \cite{mccarthy1980circumscription}. 
\begin{enumerate}
 \item If $|A| = 1$ or $2$, the problem is decidable in polynomial time. For the particular case of a vector matroid see \cite{durand2003inference},
in general one uses a matroid separation algorithm.
\item If $|A| = 3$, the question is open.
\item If $|A|=k$ is fixed and the matroid is a cycle matroid then it is decidable in polynomial time by reduction to the problem of finding $k$ disjoint paths in a graph \cite{robertson1995graph}.
\item If $|A|$ is unbounded, even if the matroid is only a cycle matroid, the question is $\NP$-complete by reduction from the Hamiltonian Path problem.
\end{enumerate}
This problem is easily expressible in $MSO_{M}$ over a matroid equipped with a unary second-order predicate $A$, by the formula $A-Circuit(X) \equiv A \subseteq X \wedge Circuit(X)$.
Thus $\textsc {A-Circuit}$ is decidable in polynomial time over representable matroids of branch-width $t$,
while it is a $\NP$-complete problem in general.

\paragraph{Counting $MSO$}

The second generalization is to add to the language a fixed number of second-order predicates $Mod_{p,q}(X)$ which mean that $X$ is of size $p$ modulo $q$.
We obtain the logic called $CMSO_{M}$ for counting monadic second-order. In this logic, we can express 
the fact that a set is a circuit of even cardinality, which is not possible in $MSO_{M}$.
Theorem \ref{main} also holds for $CMSO_{M}$ except that the translated formula is now in $CMSO$.
This is interesting since the model-checking of $CMSO$ is solvable in linear time over trees \cite{courcelle1992monadic}.

We could also adapt Theorem \ref{main} to $MSO_{M}$ problems
with optimization constraints, that is finding the minimal or maximal size of a set which satisfies a formula.
This kind of problem has been introduced in \cite{arnborg1991easy} for graphs under the name of $EMSO$.
These problems are solvable in linear time for graphs of bounded tree-width.
For instance, using the formula $A-Circuit(X)$, we can find the size of the minimum circuit which extends a set $A$.
When the matroid is binary and $|A|=1$, it is equivalent to the problem of finding the minimum
weight of a solution of an affine formula, which is $\NP$-complete \cite{berlekamp1978inherent}. 

\subsection{Enumeration}

Let us first define enumeration problems and the associated complexity measures.
Let $A$ be binary predicate over $\Sigma^{*}$ where $\Sigma$ is finite alphabet. One says that $A$ 
is \emph{polynomially balanced} when there is a polynomial
$Q$ such that if $A(x,y)$ holds then $|y| < Q(|x|)$. We write $A(x)$ for the finite set $\{y\mid A(x,y)\}$.
The enumeration problem associated to $A$, denoted by $\enum{A}$,
consists in computing the function which associates $A(x)$ to $x$. 
 
An enumeration algorithm does not output the whole set $A(x)$ and stops:
it outputs the elements of $A(x)$ one after the other. The measure of complexity, out of the total time to output 
all elements, is the time between the output of one solution and the next, which is called the \emph{delay}.
We say that a problem $\enum{A}$ is solvable by an algorithm in \emph{incremental delay}
if, for all inputs $x$, its delay between the $i^{\text{th}}$ and the $i+1^{\text{th}}$ solutions is polynomial in $|x|$ and $i$.
If its delay is polynomial in $|x|$ only, we say that the algorithm is in \emph{polynomial delay}.

We now present a theorem, which gives algorithms in polynomial delay to solve a lot of problems
on matroids. We use it specifically to solve the problem $\enum{\textsc{A-Circuit}}$ over matroids of bounded branch-width
representable on finite fields.

\begin{theorem}[Courcelle \cite{courcelle2009linear}]
Let $\phi(X_{1},\dots ,X_{m})$ be an $MSO$ formula,
there exists an enumeration algorithm which given a term $T$ of size $n$ and of depth $d$ enumerate the $m$-tuples $ B_{1},\dots, B_{m}$
such that $ T \models \phi(B_{1},\dots, B_{m})$ with a linear delay and a preprocessing time $O (n \times d) $.
\label{courcelle}
\end{theorem}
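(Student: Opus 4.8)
The statement is a known result (that of \cite{courcelle2006lde}); what follows is the argument I would reconstruct, which goes through tree automata. The plan is to reduce the enumeration of tuples satisfying $\phi$ to the enumeration of accepting runs of a finite bottom-up automaton on $T$, and then to enumerate those runs by a guided depth-first traversal of $T$ whose every partial step is guaranteed to extend to a full solution.

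First I would encode a candidate $m$-tuple $B_{1},\dots,B_{m}$ of sets of nodes as an extra labelling of $T$: each node $v$ carries, besides its original label, the subset $\{\, i : v \in B_{i} \,\}$ of $\{1,\dots,m\}$. Over this enlarged (still finite) alphabet the property $\phi(X_{1},\dots,X_{m})$ becomes an $MSO$ property of a \emph{labelled} tree, so by the classical equivalence between $MSO$-definability and recognizability of sets of labelled trees (which underlies Theorem \ref{folklore}) there is a deterministic bottom-up tree automaton $\mathcal{B}$, of size bounded by a function of $|\phi|$, accepting exactly the enlarged trees that encode a solution. Since $\mathcal{B}$ is deterministic, each solution induces \emph{exactly one} accepting run, so it suffices to enumerate, without repetition, the pairs consisting of a choice of extra labelling and its induced accepting run.

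Next comes the preprocessing, in two passes over $T$. A bottom-up pass computes, for every node $v$, the set $R(v)$ of states that $\mathcal{B}$ can reach at $v$ for some extra labelling of the subtree $T_{v}$, together with, for each such state, the list of local choices (extra label at $v$, pair of states at the children) realising it. A top-down pass then computes, for every node $v$, the set $\widehat{R}(v)$ of states $q$ such that some accepting run of $\mathcal{B}$ is in state $q$ at $v$ for a suitable extra labelling of the rest of the tree, using the sets $R$ at the siblings along the path to the root. Each pass costs $O(n)$ up to constants depending on $\phi$; the stated $O(n\times d)$ bound accounts for the additional bookkeeping along root-to-node paths that keeps the subsequent traversal efficient.

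Finally, the enumeration itself. A solution is precisely an assignment of an extra label and a state to each node that is locally consistent with $\mathcal{B}$, puts the root in an accepting state, and uses only states in $R(v)\cap\widehat{R}(v)$ at each $v$. I would enumerate these by a depth-first traversal from the root: at the current node pick a locally consistent triple whose child-states survive in $R\cap\widehat{R}$, recurse into the two subtrees, and on exhaustion backtrack to the most recent node with an unexplored choice and re-descend. Intersecting with $\widehat{R}$ guarantees that every partial choice extends to a full solution, so the traversal never enters a dead branch: the first tuple appears after one root-to-leaves sweep, and each later tuple after re-descending from the last changed choice point, which is linear. I expect the delicate point to be exactly this delay analysis — showing that the co-reachability lookahead together with a careful amortization of the backtracking keeps the work between two consecutive outputs linear, and that determinism of $\mathcal{B}$ prevents a tuple from being emitted twice. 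Building $\mathcal{B}$, running the two passes, and maintaining the stack of choice iterators are then routine.
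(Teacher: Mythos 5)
The paper does not prove this statement at all: it is imported verbatim from Courcelle \cite{courcelle2006lde} and used as a black box (only its combination with Theorem \ref{main} is argued, in Corollary \ref{enumeration}). So there is no in-paper proof to compare against; your attempt has to be judged against the cited result itself. Your skeleton is indeed the standard one behind such theorems: encode the tuple $(B_{1},\dots,B_{m})$ as an enlarged node labelling, pass to a deterministic bottom-up tree automaton via the recognizability/definability equivalence, prune with reachability information, and enumerate accepting runs by a backtracking traversal. One small remark: once you fix, as part of the local choice at a node, the states to be realized at its two children, the global co-reachability sets $\widehat{R}(v)$ are not quite the right invariant (they quantify over an unconstrained outside, whereas part of the outside is already fixed); what you actually need, and what your scheme implicitly uses, is simply that the prescribed child state lies in $R(\text{child})$, the target state being propagated top-down along the current branch.

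The genuine gap is the delay analysis, which is the entire content of the theorem and which you explicitly defer (``I expect the delicate point to be exactly this delay analysis''). Two things are missing. First, ``linear delay'' in \cite{courcelle2006lde} means delay linear in the \emph{size of the produced solution}, not in the size $n$ of the tree; a tuple $B_{1},\dots,B_{m}$ may be much smaller than $n$, and a plain depth-first backtracking sweep between two consecutive solutions can cost $\Theta(n)$ even when consecutive solutions differ in a single leaf, so the naive traversal only gives delay $O(n)$, not the claimed bound. Closing this requires additional machinery (precomputed jump/shortcut structure over the tree so that the enumerator moves directly between the nodes where choices change, which is also where the depth-dependent $O(n\times d)$ preprocessing really comes from), none of which is supplied or even sketched. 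Second, the amortization argument that bounds the total backtracking work between outputs, and the argument that determinism prevents duplicate outputs, are asserted rather than proved. As it stands the proposal establishes (modulo routine details) an enumeration procedure with polynomial, indeed $O(n)$, delay, but not the statement of Theorem \ref{courcelle}.
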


The next corollary is a direct consequence of Theorems  \ref{main} and \ref{courcelle}.

\begin{corollary}
\label{enumeration}
Let  $\phi(X_{1},\dots ,X_{m})$ be an $MSO_{M}$ formula, let $t$ be an integer and let $\mathbb{F}$ be a field.
There is an algorithm, which given a $\mathbb{F}$-matroid $M$ of branch-width less than $t$,
enumerates the $m$-tuples $ B_{1},\dots, B_{m}$ such that $ M \models \phi(B_{1},\dots, B_{m})$ with a linear delay after a cubic preprocessing time.
\end{corollary}

\begin{proof}
 Let $\phi(X_{1},\dots ,X_{m})$ be an $MSO_{M}$ formula, we compute in constant time $F(\phi(X_{1}, \dots, X_{m}))$, the formula for matroids of branch-width at most $t$ given by Theorem \ref{main}.
Then, given a matroid of branch-width $t$, we compute its enhanced tree in cubic time. We run the enumeration algorithm
given by Theorem \ref{courcelle} on this enhanced tree and the formula $F(\phi(X_{1} ,\dots, X_{m}))$.
Each time we find a $m$-tuple satisfying the formula, we output its image by the bijection
between the leaves of the enhanced tree and the elements of the matroid. This algorithm gives
the solutions of $\phi(X_{1},\dots ,X_{m})$ with a linear delay and a cubic preprocessing time.
\end{proof}

Our example of the previous subsection, the problem $\textsc{A-Circuit}$, yields the interesting enumeration problem $\enum{\textsc{A-Circuit}}$.
This problem admits an algorithm in incremental delay \cite{1113216} when $|A|=1$ and the matroid has an independence predicate decidable in polynomial time.
We would like to have an algorithm for this problem with polynomial delay rather than incremental.
The only known result in this vein is for $|A|$ of fixed size and cycle matroids \cite{read1975bounds}.

Corollary \ref{enumeration} can be adapted to $MSO_{M}$ over colored matroids and thus applied to the formula $A-Circuit(X)$.
We obtain an algorithm in linear delay, which solves $\enum{\textsc{A-Circuit}}$ on matroids representable over a finite field and of branch-width $t$.
In addition to its good delay, this algorithm is the first which solves the problem for an unbounded $A$. Moreover, the time it takes to output all solutions is linear
in the number of solutions, while the incremental algorithm of \cite{1113216} needs a time cubic in this number.
Another polynomial delay algorithm for $\enum{\textsc{A-Circuit}}$ on a class of very ``dense'' representable matroids is also presented in the second chapter of \cite{phd_strozecki}.

\section{Matroid Operations}
\label{sec:terms}

In this section, we give two different ways to build matroids by means of some well chosen operations.
We then prove that the model-checking of $MSO_M$ is decidable in linear time on these classes of matroids. Definitions and notations are inspired from
\cite{hlinveny2006bwp} and are sometimes slightly modified to deal with different matroid grammars.

\subsection{Pushout of boundaried matrices}

%dire que veut coller des gens d'ou cette notion de frontière
\begin{definition}[Boundaried matroid]
A pair $(M,\gamma)$ is called a $t$ boundaried matroid if $M$ is a matroid and 
$\gamma$ is an injective function from $ \left[1,t \right]$ to $M$ whose image is an independent set.
The elements of the image of $\gamma$ are called boundary elements and the others are called internal elements.
\end{definition}

The restriction of $M$ to its ground set minus the elements of the boundary is called the internal matroid of  $(M,\gamma)$.
We need an operation $\oplus$, which associates a matroid $N_{1} \oplus N_{2}$ to two $t$ boundaried matroids $ N_{1}$ and $N_{2}$.
By means of this operation, we try to properly define a set of terms similar to those introduced in \cite{hlinveny2006bwp}. 
Hereafter, we explain how these terms are related to enhanced trees.
The same technique will be used with a different operation in the next section. 

A $t$ boundaried matrix is a matrix $A$ and an injective function $\gamma$ from $ \left[1,t \right]$
to $A$ whose image is an independent set of columns.
Boundaried matrices represent boundaried matroids in the obvious way.
In fact, we define the operation $\oplus$ on boundaried matrices and not on the boundaried matroids they represent.

We want to define $\oplus$ as the pushout (or amalgam) of two boundaried matrices.
 It would then generalize the construction of decomposition trees for graphs of bounded branch-width, also obtained by a pushout in the category of graphs. 
By pushout, we mean an operation  such that  $A_1$ and $A_2$ can be injected in 
$A_{1} \oplus A_{2}$ by the morphisms $i_1$ and $i_2$ respectively and such that 
$i_1(\gamma_1(j)) = i_2(\gamma_2(j))$ for all $j$.
We present a way to define such a pushout between two vector spaces, which
yields an operation on boundaried matrices.

Let $(A_{1},\gamma_{1})$ and $(A_{2},\gamma_{2})$ be two $t$ boundaried matrices over the same field $\mathbb{F}$.
We see $A_{i}$ as a set of vectors in the vector space $E_{i}$.
Let $E_{1} \times E_{2}$ be the direct product of the two vector spaces and  let $B$ be its subspace generated 
by the elements $(\gamma_{1}(j), -  \gamma_{2}(j)) $ for all $j$.

\begin{definition}
Let $E$ be the quotient space of $\left( E_{1} \times E_{2}\right)$ by 
$B$. We write  $(A_{1},\gamma_{1}) \oplus(A_{2},\gamma_{2})$ the set of vectors in $E$ 
of the form $(a_1,0)$ with $a_1 \in A_{1}\setminus \gamma_1([1,t])$ and $(0,a_2)$ with $a_2 \in A_{2}\setminus\gamma_2([1,t])$.
\end{definition}

Remark that $(A_{1},\gamma_{1}) \oplus(A_{2},\gamma_{2})$ defines a (non boundaried) $\mathbb{F}$-matroid.
To have a more specific idea of the action of $\oplus$ and give examples, 
we must explain how to unambiguously represent $(A_{1},\gamma_{1}) \oplus(A_{2},\gamma_{2})$ by a matrix.
Since, once a base is chosen, a set of vectors and a matrix are the same objects, 
we only have to give an algorithm to build a base of $E$.
We build a base $B$ of $E$ from $C$ and $D$, the canonical bases of $E_1$ and $E_2$. 
Let $i_1$ (respectively $i_2$) be the injection from $E_1$ to $E$ (resp. from $E_2$ to $E$). Let $B_{0} =  i_1(C) $ and $B_{j+1} = B_{j} \cup \left\lbrace i_2(D_{j+1})\right\rbrace$
 if this set is independent, otherwise $B_{j+1} = B_{j}$. Let $n$ be the size of $C_{2}$, then $B$ is $B_{n}$, which is by construction 
 a base of $E$.

\begin{example}
\label{counterexample}
\begin{small}

$$
\begin{array}{c c c c c}
  \left( \begin{array}{c c  |  c}
1 & 0\, & \,1 \\
0 & 1\, & \,1 \\
 \end{array} \right) &
\oplus &
 \left( \begin{array}{c c | c c}
1 & 0\, & \,1 & 0\\
0 & 1\, & \,1 & 0\\
0 & 1\, & \,1 & 1
 \end{array} \right)  &
= &
 \left( \begin{array}{c c c}
1 & 1 & 0\\
1 & 1 & 1\\
0 & 0 & -1
 \end{array} \right) 
 \end{array}
$$
$$
\begin{array}{c c c c c}
  \left( \begin{array}{c c| c}
1 & 0\, & \,1 \\
0 & 1\, & \,1 \\
 \end{array} \right) &
\oplus &
 \left( \begin{array}{c c | c c}
1 & 0\, & \,2 & 0\\
0 & 1\, & \,1 & 0\\
0 & 1\, & \,1 & 1
 \end{array} \right)  &
= &
 \left( \begin{array}{c c c}
1 & 2 & 0\\
1 & 1 & 1\\
0 & 0 & -1
 \end{array} \right) 
 \end{array}
$$
\end{small}
The matrices of the example can be seen as defined over $\mathbb{F}_{3}$ or any larger field.
The boundary elements are the two first columns of the matrices, separated from the others by the symbol $|$ for clarity.
 The image of the canonical base of $E_{1}$ in $E$ is $\{e_{1},e_{2}\}$ and the image of $E_2$ is $\{e_{3},e_{4},e_{5}\}$.
By identification of the first and second columns, we have $e_{1}=e_{3} $ and $e_{2} =e_{4} + e_{5} $.
The basis built by the algorithm is thus $\left\lbrace e_{1}, e_{2}, e_{4}\right\rbrace$.\\
The column $(1,1,1)^{t}$ of the second matrix in the left hand side of the first equation is represented in the right hand side by  
$(1,1,0)^{t}$. Indeed, once injected in $E$, this vector is equal to $e_{3}+e_{4}+e_{5}$ which is equal to $e_{1}+e_{2}$,
 the sum of the two first vectors of the base we have built.

Notice that the columns $1$ and $2$ of the result in the first equation form a dependent set but not in the result of the second,
thus the two matrices obtained represent distinct matroids.
Yet the matrices we combine by $\oplus$, although different, represent the same matroid in both equations. 
\end{example}

Example  \ref{counterexample} shows that $\oplus$ cannot be seen as an operation on matroids because the result depends on the way
the matroids are represented. We could also make this kind of construction by representing matroids by projective spaces, as it is done in \cite{hlinveny2006bwp}.
Unfortunately, we would define essentially the same operation, which would still be defined over the projective spaces and not the matroids.
Nevertheless, if we restrict $\oplus$ to matrices over $\mathbb{F}_{2}$, it properly defines an operation on the matroids they represent.

\begin{proposition}
 Let $(M_{1},\gamma_{1})$ and $(M_{2},\gamma_{2})$ be two boundaried $\mathbb{F}_{2}$-matroids. For all 
  matrices $A_{1}$ and $A_{2}$ which represents these matroids, the matroid represented by $A_{1} \oplus A_{2}$ 
 is the same.
\end{proposition}

\begin{proof}
We are going to show that the fact to be a circuit of  $A_{1} \oplus A_{2}$ depends
only on $M_1$ and $M_2$. Since a matroid is entirely determined by its set of circuits, it will prove the proposition.

 A circuit of $A_{1} \oplus A_{2}$ is the union of internal elements of $A_{1}$ and $A_{2}$ denoted by 
 $X$ and $Y$ such that $\displaystyle{\sum_{x \in X} (x,0) + \sum_{y \in Y}(0,y)} \in \left\langle \left\lbrace (\gamma_{1}(j), -  \gamma_{2}(j))\right\rbrace_{j\leq t}  \right\rangle$ and $X \cup Y$ is minimal for this property. Equivalently, there is a set $S \subseteq [1,t]$ such that 
 the two following relation hold: 
 \begin{itemize}
  \item $\displaystyle{\sum_{x \in X} x + \sum_{i \in S} \gamma_1(i) = 0}$
  \item  $\displaystyle{\sum_{y \in Y} y + \sum_{i \in S} \gamma_2(i) = 0}$
 \end{itemize}
This is true because, the matrices $A_1$ and $A_2$ are defined over $\mathbb{F}_{2}$, therefore all coefficients
different from zero have to be one.
 It is equivalent to:  $X \cup \gamma_1(S)$ is a circuit of  $A_1$, thus of $M_1$ and $Y \cup \gamma_2(S)$ is a circuit of $A_2$ thus of $M_2$
\end{proof}

Behind this proof is hidden the notion of the signature of a set in a boundaried matroid that we are going to use afterwards.
We now want to build matroids from successive applications of the operation $\oplus$. 

\begin{definition}
Let $A$ be a matrix and let $\gamma^{A}_{i}$ for $i=1,2,3$ be
three injective functions from $\left[1,t_{i}\right]$ to the columns of $A$.
If the sets $\gamma^{A}_{i}(\left[1,t_{i}\right])$ are independent
and form a partition of the columns of $A$,
then $(A,\{\gamma^{A}_{i}\}_{i=1,2,3})$ is called a \emph{$3$-partitioned matrix}.

\noindent
Let $M$ be a matroid and let $\gamma^{M}_{i}$ for $i=1,2,3$ be
three injective functions from $\left[1,t_{i}\right]$ to the ground set of $M$.
If the sets $\gamma^{M}_{i}(\left[1,t_{i}\right])$ are independent
and form a partition of the columns of $M$,
then $(M,\{\gamma^{M}_{i}\}_{i=1,2,3})$ is called a \emph{$3$-partitioned matroid}.
\end{definition}

The characteristic matrices used to build the enhanced trees may be seen as $3$-partitioned matrices.
From $\oplus$ and $A$ a $3$-partitioned matrix we define an operator $\odot_{A}$ which associates a boundaried matrix to two boundaried matrices.
It is defined by two successive uses of $\oplus$ on the boundaries $\gamma^{A}_{1}$ and $\gamma^{A}_{2}$.

\begin{definition}
\label{def:odot}
Let $\overline{A_{1}}=(A_{1},\gamma_{1})$ and $\overline{A_{2}}=(A_{2},\gamma_{2})$  be respectively a $t_{1}$ and a $t_{2}$ boundaried matrix
and let $A$ be a $3$-partitioned matrix.
 We call $ \overline{A_{1}} \odot_{A} \overline{A_{2}}$ the $t_{3}$ boundaried matrix defined by $ (\overline{A_{1}} \oplus ( A,  \gamma^{A}_{1}), \gamma^{A}_{2}) \oplus \overline{A_{2}}$ with boundary $\gamma^{A}_{3}$.
\end{definition}

\begin{figure}[ht]
 \centering
 \includegraphics[scale=0.4]{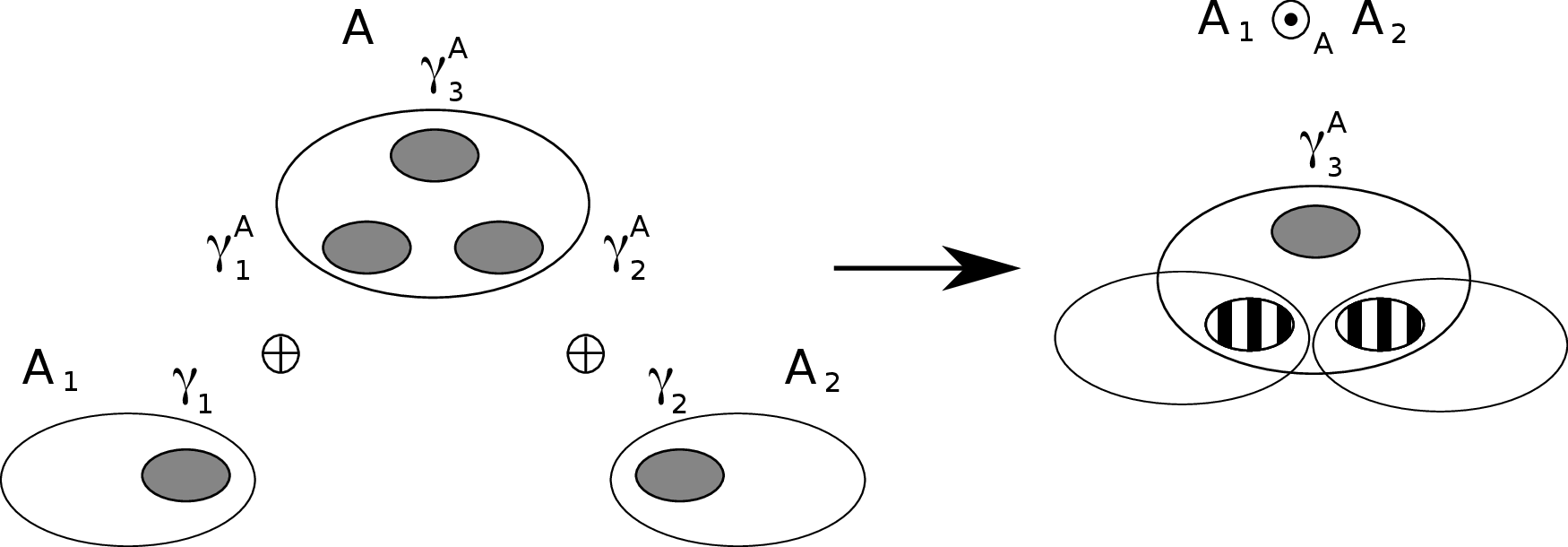}
 \caption{Representation of the operation $\odot$, boundaries represented in grey and removed parts hatched}
\end{figure}  

The operation $\oplus$ is ``associative'' meaning that $\overline{A_{1}}\odot_{A}\overline{A_{2}}$
 can also be defined by  $ \overline{A_{1}} \oplus (( A,\gamma^{A}_{2}) \oplus \overline{A_{2}}, \gamma^{A}_{1})$ with boundary $\gamma^{A}_{3}$.

Let $\Upsilon$ be the set containing the two following $1$-boundaried matrices:
\begin{itemize}
 \item $\varUpsilon_{0}$ is the matrix $\left(\begin{array}{c | c}
 1\, & \,0 \\
0\, & \,1
\end{array}\right)
 $ .
\item $\varUpsilon_{1}$ is the matrix $\left( \begin{array}{c | c}
 1\, & \,1 
\end{array}\right) 
 $ .
\end{itemize}
\begin{definition}
 Let $\mathcal{M}^{\mathbb{F}}_{t}$ be the set of terms which are inductively defined by:
 \begin{itemize}
  \item An element of $\varUpsilon$ is a term of $\mathcal{M}^{\mathbb{F}}_{t}$.
  \item Let $T_{1}$ and $T_{2}$ be two terms of value $A_{1}$ and $A_{2}$ which are a $t_{1}$ and a $t_{2}$ boundaried matrix.
  Let $A$ be a $3$-partitioned matrix, its three parts being of cardinality $t_{1}$, $t_{2}$ and $t_{3}$, \emph{all less than or equal to $t$}.
  Then $A_{1} \odot_{A} A_{2}$ is a term of  $\mathcal{M}^{\mathbb{F}}_{t}$ whose value is a $t_{3}$ boundaried matrix.
 \end{itemize}
\end{definition}

The value of a term of $\mathcal{M}^{\mathbb{F}}_{t}$ is a matrix with a boundary. We will not distinguish
 a term from its value and the matroid it represents when we remove the boundary.
 To study the matroids represented by these terms, we now need to define the signature of a set $X$ in 
 exactly the same way as for enhanced trees.

\begin{definition}
 Let $T$ be a term of $\mathcal{M}_{t}^{\mathbb{F}}$ and let $(A,\gamma^{A})$ the boundaried matrix defined by $T$.
 We write $l$ for the size of the boundary of $A$.  Let $X$ be a subset of internal elements of $A$. 
 We say that $X$ admits the signature $\lambda = (\lambda_{1}, \dots, \lambda_{l})$ in $T$
 if there is a nontrivial linear combination of its elements equal to $\displaystyle{\sum_{i\leq l} \lambda_{i}\gamma^{A}(i)}$.
 The set $X$ always admits the signature $\varnothing$.
 \end{definition}

We now show that the signatures in a term of $\mathcal{M}_{t}^{\mathbb{F}}$ satisfy the relation $R$
given in Definition \ref{def:relation}. To this aim, we prove two lemmas similar to Lemmas \ref{build} and \ref{unbuild} in
which we use the following notations:
\begin{itemize}
 \item Let $T$ be a term of $\mathcal{M}_{t}^{\mathbb{F}}$ equal to $T_1 \odot_A T_2$, where
$A$ is a $3$-partitioned matrices.
\item The terms $T_1$, $T_2$ and $T$ represent the $t_1$, $t_2$ and $t_3$ boundaried matrices $(N_1,\gamma_1)$, $(N_2,\gamma_2)$ and $(N,\gamma_3)$.
\item Let $E_1$, $E_2$ and $E_3$ be the vector spaces generated by the columns of $N_1$, $N_2$ and $A$.
\item Let $V$ be the vector space $E_1\times E_2 \times E_3$.
\item Let $B$ be $\left\langle \left\lbrace (\gamma_{1}(j),0,0) -  (0,0,\gamma_{1}^A(j))\right\rbrace  \right\rangle$ and $C$ be 
$\left\langle \left\lbrace (0,\gamma_{2}(j),0) -  (0,0,\gamma_{2}^A(j))\right\rbrace  \right\rangle$.
\item Let $E$ be the quotient of $V$ by $B$ and then by $C$, it is the vector space which is used to define $(N,\gamma_3)$.
\item Let $\phi_1$, $\phi_2$ and $\phi_3$ be the injection of $E_1$, $E_2$ and $E_3$ in $E$.
\end{itemize}

\begin{lemma}
\label{buildbis}
Let $X_{1}$ and $X_{2}$ be two sets of internal elements of $N_1$ and $N_2$.
If $X_1$ admits $\mu$ in $T_1$, $X_{2}$ admits $\delta$ in $T_2$ and 
 $R(A, \lambda,\mu,\delta)$ holds then $X = X_{1} \cup X_{2}$ admits $\lambda$ in $T$. 
\end{lemma}

\begin{proof}
By definition of the signature, we know that a nontrivial combination 
of internal elements of $X_1$ (respectively of $X_2$) is equal to $\displaystyle{\sum_{1 \leq i \leq t_1} \mu_i \gamma_1(i)}$ 
(respectively to  $\displaystyle{\sum_{1 \leq j \leq t_2} \delta_j \gamma_2(j)}$).
 Therefore, there is a combination of elements of $X_1 \cup X_2$ 
seen as elements of $E$ which we write $v$ and which satisfies:
$$
v = \displaystyle{\phi_1\left(\sum_{1 \leq i \leq t_1} \mu_i \gamma_1(i)\right)} + \displaystyle{\phi_2\left(\sum_{1 \leq j \leq t_2} \delta_j \gamma_2(j)\right)}
$$
Since $\phi_1(\gamma_1(i)) = \phi_3(\gamma_1^A(i))$ and $\phi_2(\gamma_2(i)) = \phi_3(\gamma_2^A(i))$ for all $i$,
 $$v  = \displaystyle{\phi_3\left(\sum_{1 \leq i \leq t_1} \mu_i \gamma_1^A(i)\right) + \phi_3\left(\sum_{1 \leq j \leq t_2} \delta_j \gamma_2^A(j)\right)}$$
Moreover, $\phi_3$ is a linear function, therefore we have:
$$v = \displaystyle{\phi_3\left(\sum_{1 \leq i \leq t_1} \mu_i \gamma_1^A(i) + \sum_{1 \leq j \leq t_2} \delta_j \gamma_2^A(j)\right)}$$
Because $R(A,\lambda,\mu,\delta)$ holds, we have the equality 
$$\displaystyle{\sum_{1 \leq k \leq t_3} \lambda_k \gamma_3^A(k) = \sum_{1 \leq i \leq t_1} \mu_i \gamma_1^A(i) + \sum_{1 \leq j \leq t_2} \delta_j \gamma_2^A(j)} $$
This equality yields 
 $$v = \displaystyle{\phi_3\left(\sum_{1 \leq k \leq t_3} \lambda_k \gamma_3^A(k) \right)}  = \displaystyle{\sum_{1 \leq k \leq t_3} \lambda_k \gamma_3(k)}$$
It means that $X = X_{1} \cup X_{2}$ admits the signature $\lambda$ in $T$, since $\gamma_3$ is the boundary of $N$.
\end{proof}

\begin{lemma}
\label{unbuildbis}
Let $X_{1}$ and $X_{2}$ be two sets of internal elements of $N_1$ and $N_2$.
If $X = X_{1} \cup X_{2}$ admits $\lambda$ in $T$, then there are two signatures $\mu$ and $\delta$
such that $R(A, \lambda,\mu,\delta)$ holds, $X_1$ admits $\mu$ in $T_1$ and $X_{2}$ admits $\delta$ in $T_{2}$.
\end{lemma}
\begin{proof}
 Since $X$ admits $\lambda$ in $T$, there is a linear combination of elements of $X$ equal to $\displaystyle{\phi_3\left(\sum_{1 \leq k \leq t_3} \lambda_k \gamma_3(k) \right)}$. It is equivalent to say that we have the following equality in $V$:
 \begin{equation}
  \label{eqter}
 (v_1,0,0) + (0,v_2,0) + (b_1,0,b_2) + (0,c_1,c_2) = \displaystyle{\sum_{1 \leq k \leq t_3} (0,0,\lambda_k \gamma_3^A(k))},
 \end{equation}
 where $ (v_1,0,0)$ is a combination of elements of $X_1$ injected in $V$, $(0,v_2,0)$ is a combination of elements of $X_2$ injected in $V$, $ (b_1,0,b_2) \in B$ and $(0,c_1,c_2) \in C$.
Since $ (b_1,0,b_2)$ is in $B$, there is a signature $\mu$ such that it is equal to: 
 $$  \displaystyle{\sum_{1 \leq i \leq t_1}( \mu_i \gamma_1(i), 0, -\mu_i \gamma_1^{A}(i)})$$
 In the same way, there is a signature $\delta$ such that $ (0,c_1,c_2)$ is equal to: 
  $$  \displaystyle{\sum_{1 \leq j \leq t_2}(0, \delta_j \gamma_1(j), - \delta_i \gamma_1^{A}(j)})$$
Equation \ref{eqter} implies that $v_1=-b_1$ and $v_2= -c_1$, therefore $X_1$ is of signature $\mu$ in $T_1$ and 
$X_2$ is of signature $\delta$ in $T_2$. We also deduce from Equation \ref{eqter}: $$b_2+ c_2 = \displaystyle{\sum_{1 \leq k \leq t_3} (0,0,\lambda_k \gamma_3^A(k))}$$
Therefore $R(A,\lambda,\mu,\delta)$ holds.
\end{proof}

By means of these two lemmas, we can prove that enhanced trees of width $t$ and  $\mathcal{M}_{t}^{\mathbb{F}}$ are the same object.
Let $g$ be the after defined bijection between the enhanced trees of width $t$ and $\mathcal{M}_{t}^{\mathbb{F}}$.
Let $T$ be an enhanced tree, one replaces $A$ on each internal node by  $\odot_{A} $ (a characteristic matrix is a $3$-partitioned matrix).
The images of the leaves labeled $(0)$ and $(1)$ are the constants $\varUpsilon_{0}$ and $\varUpsilon_{1}$ respectively.

\begin{theorem}
\label{equi}
Let $M$ be a $\mathbb{F}$-matroid, then 
$T$ is one of its enhanced tree of width $t$  if and only if $M$ is the value of the term $g(T)$.
\end{theorem}

\begin{proof}
One can prove a theorem of characterization of dependent sets by the signatures on the terms of $ \mathcal{M}_{t}^{\mathbb{F}}$
identical to Theorem \ref{finale}, using Lemmas \ref{buildbis} and \ref{unbuildbis}. Therefore $T$ and $g(T)$ define the same matroid. 
\end{proof}

\begin{example}

 \begin{figure}[ht]
 \centering
 \includegraphics[scale=0.4]{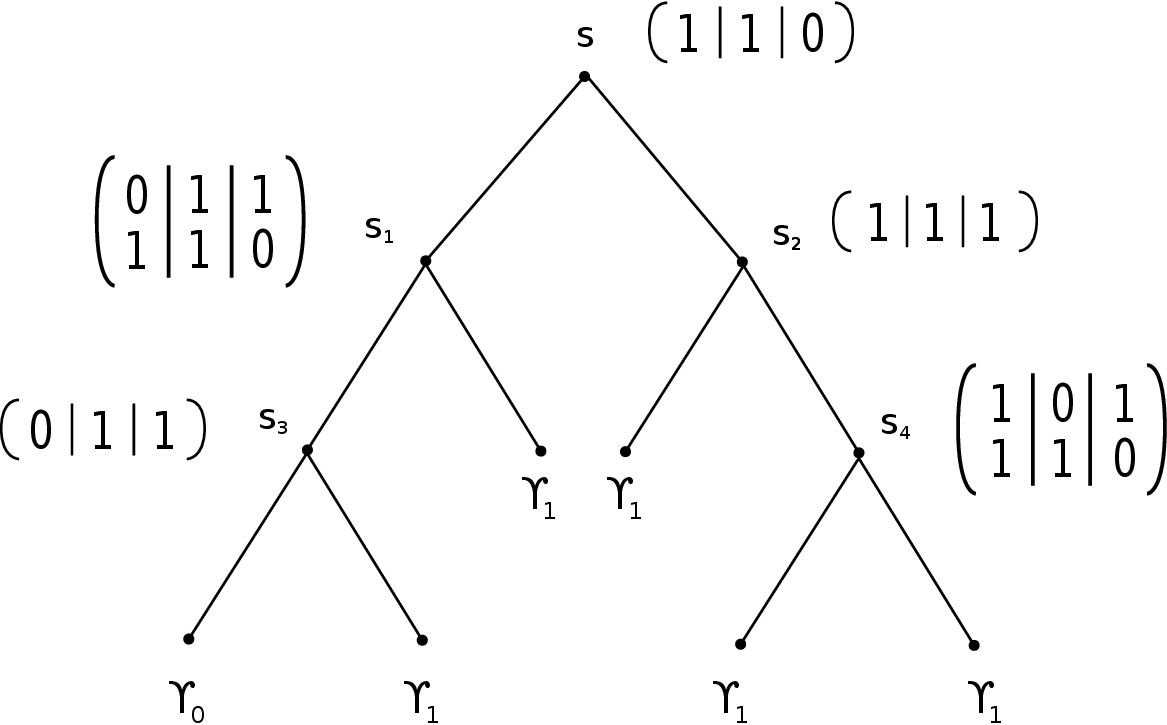}
\caption{The term associated to the enhanced tree of Fig. \ref{fig:enhanced} }
 \label{fig:term}
\end{figure}

We give here the matrices, with their boundary on the left side, which are constructed 
when evaluating the term of Fig. \ref{fig:term}
$$
\begin{array}{cc}
\begin{array}{c}
 M_{s_{3}} = \left(\begin{array}{c|cc}
0&1&0  \\
1&0&1  
\end{array}\right) \\
M_{s_{1}} = 
\left( \begin{array}{c|ccc}
0 & 1 & 0 & 0 \\
0 & 0 & 1 & 1 \\
1 & 0 & 0 & 1 \\
\end{array} \right)
\end{array}
 &
\begin{array}{c}
 M_{s_{4}} = \left(\begin{array}{c|cc}
1 & 1 & 0  \\
0 & 1 & 1  
\end{array}\right)
 \\
M_{s_{2}} =
\left( \begin{array}{c|ccc}
1 & 1 & 0 & 1  \\
0 & 1 & 1 & 0  \\
\end{array} \right)
\end{array}
\end{array}
$$

$$
M_{s} = 
\left( \begin{array}{cccccc}
1 & 0 & 0 & 0 & 0 & 0\\
0 & 1 & 1 & 0 & 0 & 0\\
0 & 0 & 1 & 1 & 0 & 1\\
0 & 0 & 0 & 1 & 1 & 0
\end{array} \right)
$$

The matrix $M_{s}$ represents the same matroid as the matrix $X$ of Fig. \ref{fig:tree} which was used to find an enhanced tree and then a term as explained in the
proof of the previous theorem.
\end{example}

%\begin{remark}
 %One could replace the condition that the boundaries are independent sets 
 %by a condition on $\oplus$. We say that $\oplus$ is defined for $M_1$ and
% $M_2$ only if their boundaries are isomorphic. All the results of the section hold with 
% this generalization. However, it does not allow to represent more matroids and it is thus not interesting.
%\end{remark}
% a ajouter dans la these en expliquant pourquoi et reflechir a des frontieres quelquonque, plus expressif ou pas ?

\subsection{Series and parallel connections}

In this subsection we consider two of the most simple operations on matroids, called the series and parallel connections. 
They extend well-known graph operations, which are used to characterize the graphs of tree-width $2$ \cite{bodlaender1998pka}.
By means of these operations, we describe a class of matroids, which are not all representable, using the methods introduced in the previous subsection.
The following definition and theorem are taken from \cite{oxley1992mt}.
%dire dans la these que c'est interessant car on introduit une classe avec indep decidable en temps poly
% plus dire que l'op est intéressante car on fait une classe interessante avec sum, 2 sum et matroides graphiques

\begin{definition}
 Let $M_{1}$ and $M_{2}$ be two $1$ boundaried matroids of ground set $S_{1}$ and $S_{2}$.
Their respective boundaries are $\left\lbrace p_{1}\right\rbrace $ and $\left\lbrace p_{2}\right\rbrace $.
 We denote by $\mathcal{C}(M)$ the collection of circuits of the matroid $M$.
Let $E$ be the set $S_{1} \cup S_{2} \cup \left\lbrace p \right\rbrace \setminus\left\lbrace p_{1}, p_{2} \right\rbrace $.
We define two collections of subsets of $E$:

$$\begin{array}{l} C_{S} = \left\lbrace \begin{array}{l l}
 \mathcal{C}(M_{1}\setminus \left\lbrace p_{1} \right\rbrace)  \cup  \mathcal{C}(M_{2}\setminus \left\lbrace p_{2} \right\rbrace)  \\
 \cup \left\lbrace C_{1}\setminus \left\lbrace p_{1} \right\rbrace \cup  C_{2}\setminus \left\lbrace p_{2}\right\rbrace \cup \left\lbrace p \right\rbrace \,|\, p_{i} \in C_{i} \in \mathcal{C}(M_{i})  \right\rbrace
  \end{array}\right. 
\\
~
\\
C_{P} =  \left\lbrace  \begin{array}{l l}
 \mathcal{C}(M_{1}\setminus \left\lbrace p_{1} \right\rbrace)  \cup  \mathcal{C}(M_{2}\setminus \left\lbrace p_{2} \right\rbrace) \\
 \cup_{i=1,2} \,\, \left\lbrace C_{i}\setminus \left\lbrace p_{i} \right\rbrace \cup \left\lbrace p \right\rbrace \,|\, p_{i} \in C_{i} \in \mathcal{C}(M_{i})  \right\rbrace \\
  \cup \left\lbrace C_{1}\setminus \left\lbrace p_{1} \right\rbrace \cup  C_{2}\setminus \left\lbrace p_{2}\right\rbrace \,|\, p_{i} \in C_{i} \in \mathcal{C}(M_{i})  \right\rbrace 
            \end{array}\right. 
\end{array}
  $$

\end{definition}

\begin{theorem}
 The sets $C_{S}$ and $C_{P}$ are collections of circuits of a matroid on $E$.
\end{theorem}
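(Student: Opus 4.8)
The plan is to verify the three circuit axioms of a matroid directly for each of the two families $C_S$ and $C_P$ on the ground set $E$: (C1) $\varnothing$ is not a member, (C2) no member is a proper subset of another, and (C3) the circuit elimination axiom. Since $M_1 \setminus \left\lbrace p_1 \right\rbrace$ and $M_2 \setminus \left\lbrace p_2 \right\rbrace$ are matroids, the subfamilies $\mathcal{C}(M_1 \setminus \left\lbrace p_1 \right\rbrace)$ and $\mathcal{C}(M_2 \setminus \left\lbrace p_2 \right\rbrace)$ already satisfy the axioms among themselves; the real work is to control the ``glued'' members --- the ones meeting $\left\lbrace p \right\rbrace$ --- and their interaction with the circuits of the two restrictions. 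Observe first that, $p_i$ being a boundary element of a boundaried matroid, $\left\lbrace p_i \right\rbrace$ is independent, so $p_i$ is not a loop; the only degenerate case, where some $p_i$ is a coloop, makes all glued families empty and $C_S = C_P = \mathcal{C}(M_1 \setminus \left\lbrace p_1 \right\rbrace) \cup \mathcal{C}(M_2 \setminus \left\lbrace p_2 \right\rbrace)$, which is plainly the circuit family of a direct sum.

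Axioms (C1) and (C2) are short. For (C1): a circuit of $M_i \setminus \left\lbrace p_i \right\rbrace$ is nonempty because $\varnothing$ is independent, and every glued set contains $p$. For (C2) one runs through the possible pairs of types, using two elementary facts about $M_i$: a circuit $D$ of $M_i$ with $p_i \notin D$ cannot be contained in $C_i \setminus \left\lbrace p_i \right\rbrace$ for a circuit $C_i \ni p_i$, by incomparability of circuits in $M_i$; and $C_i \setminus \left\lbrace p_i \right\rbrace \subseteq D_i \setminus \left\lbrace p_i \right\rbrace$ with $p_i \in C_i \cap D_i$ forces $C_i = D_i$. Pushing these through the definitions rules out every proper containment between listed sets.

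The substance is (C3). Given distinct members $X, Y$ of $C_S$ and $e \in X \cap Y$, one splits on the types of $X$ and $Y$ and on the location of $e$ (in $S_1 \setminus \left\lbrace p_1 \right\rbrace$, in $S_2 \setminus \left\lbrace p_2 \right\rbrace$, or $e = p$), and in each case reduces to circuit elimination inside $M_1$ or $M_2$. Typically: if $X$ and $Y$ are glued sets built from circuits $C_1, D_1$ of $M_1$ and $C_2, D_2$ of $M_2$, with $e \in S_1 \setminus \left\lbrace p_1 \right\rbrace$ and $C_1 \neq D_1$, eliminate $e$ from $C_1 \cup D_1$ in $M_1$ to obtain a circuit $C_1''$, and return the glued set $C_1'' \setminus \left\lbrace p_1 \right\rbrace \cup C_2 \setminus \left\lbrace p_2 \right\rbrace \cup \left\lbrace p \right\rbrace$ if $p_1 \in C_1''$, otherwise the circuit $C_1''$ of $M_1 \setminus \left\lbrace p_1 \right\rbrace$; a line of checking shows the result belongs to $C_S$ and lies in $(X \cup Y) \setminus e$. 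If instead $C_1 = D_1$ (hence $C_2 \neq D_2$) eliminate $p_2$ from $C_2 \cup D_2$ in $M_2$; if $e = p$ eliminate $p_1$ from $C_1 \cup D_1$. The mixed cases, where one of $X,Y$ is a circuit of a restriction and the other a glued set, are handled the same way, and a circuit of $M_1 \setminus \left\lbrace p_1 \right\rbrace$ together with a circuit of $M_2 \setminus \left\lbrace p_2 \right\rbrace$ is the vacuous case, the two being disjoint. The argument for $C_P$ follows the identical scheme with more glued types; alternatively, granting the statement for $C_P$ one recovers it for $C_S$ through the classical duality exchanging the series and parallel connections, upon replacing $M_1$ and $M_2$ by their matroid duals, the only extra hypothesis being that the basepoints are not coloops --- the non-degenerate case already set aside. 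The sole genuine obstacle is organisational: making the case split exhaustive, and checking each time that the set produced really belongs to $C_S$ (resp.\ $C_P$) and really sits inside $(X \cup Y) \setminus e$.
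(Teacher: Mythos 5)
The paper does not prove this statement at all: it is imported verbatim from Oxley's \emph{Matroid Theory} (the sentence just before the definition says the definition and theorem are taken from \cite{oxley1992mt}), so there is no in-paper argument to measure you against. Your direct verification of the circuit axioms is, in substance, the standard textbook proof (Brylawski's argument as presented by Oxley): the incomparability checks reduce to circuit incomparability inside $M_1$ and $M_2$, and the elimination axiom reduces, case by case, to circuit elimination in one of the two summands, with the series connection obtainable either directly or from the parallel connection by duality. Your outline of these reductions is sound, and the cases you spell out for $C_S$ do go through exactly as you say.

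Two points deserve tightening before this counts as a complete proof. First, your degenerate case is stated too strongly: if only one basepoint, say $p_1$, is a coloop, then for $C_P$ the glued family $\{C_2 \setminus \{p_2\} \cup \{p\}\}$ survives, so it is not true that $C_S = C_P$; the case is still harmless ($C_P$ is then the circuit set of $(M_1 \setminus \{p_1\}) \oplus M_2$ with $p_2$ renamed $p$), but it must be said that way. Second, "the mixed cases are handled the same way" hides at least one case for $C_P$ that is \emph{not} settled by circuit elimination in either $M_i$: take $X = (C_1 \setminus \{p_1\}) \cup \{p\}$ and $Y = (C_1 \setminus \{p_1\}) \cup (D_2 \setminus \{p_2\})$ with $e \in C_1 \setminus \{p_1\}$; here the two $M_1$-circuits coincide and only one $M_2$-circuit is present, so there is nothing to eliminate, and one must instead observe directly that the member $(D_2 \setminus \{p_2\}) \cup \{p\}$ of $C_P$ already lies inside $(X \cup Y) \setminus \{e\}$. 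This is still elementary, but it shows the exhaustive case split is not purely mechanical repetition of the pattern you describe; an airtight write-up has to list these cases and, for each, name the member of $C_S$ or $C_P$ produced. With those repairs the proof is correct, and it is the same argument the cited source gives.
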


\begin{figure}[ht]
 \centering
 \includegraphics[scale=0.3]{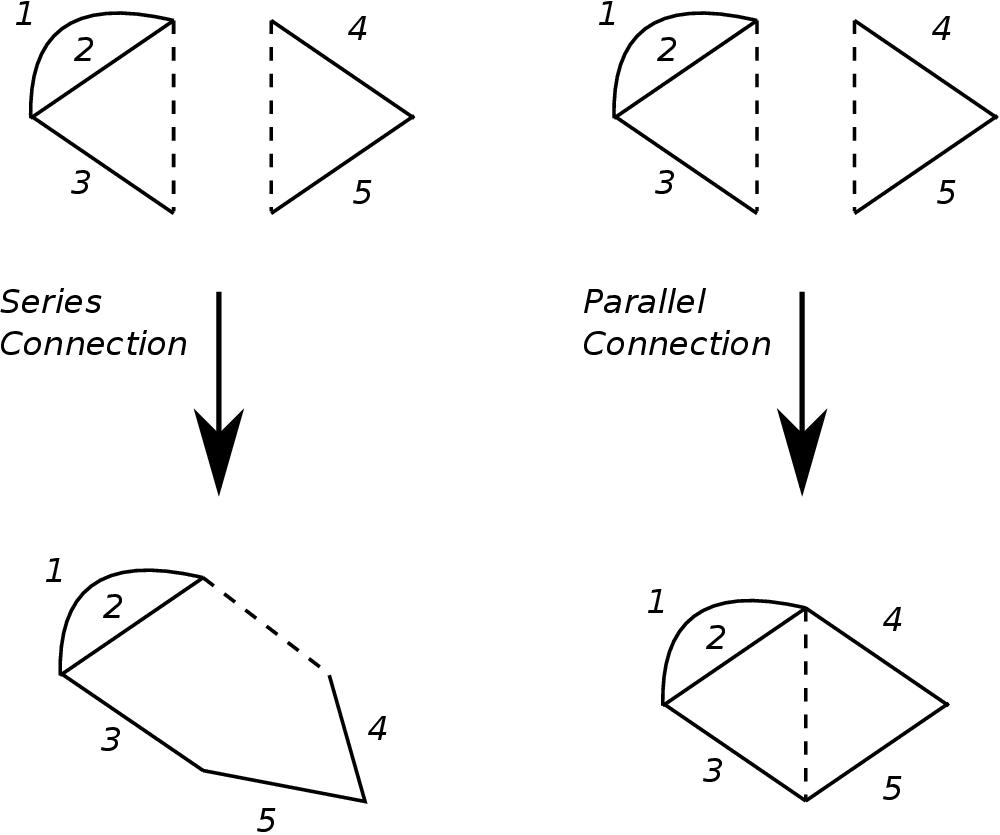}
\caption{Example of series and parallel connections over graphs with boundaries represented by a dotted line}
\end{figure}  

The matroid defined by $C_{P}$ is called the parallel connection of $M_1$ and $M_2$
while the one defined by $C_{S}$ is the series connection of $M_1$ and $M_2$. 

\begin{definition}
We write $M_1 \oplus_{p} M_2$ for the parallel connection of $M_1$ and $M_2$ restricted 
to the ground set $E \setminus\left\lbrace p \right\rbrace$.
\end{definition}

The operator $\oplus_{p}$ is known under the name of \emph{2 sum} (see \cite{oxley1992mt}).
We could also consider an operator $\oplus_{s}$, but it is only the direct sum of two matroids
and it will not enlarge the class of matroids we are about to define.
We now consider the operation $\odot$ defined as in Definition \ref{def:odot},
except that $\oplus$ is replaced by $\oplus_{p}$.

\begin{definition}
Let $\mathcal{L}_{k}$ be the set of $1$ boundaried matroids of size at most $k$ and let $\mathcal{M}$ be
the set of $3$-partitioned matroids of size $3$. We write $\mathcal{T}_k$ for the set of terms $T(\mathcal{L}_{k},\mathcal{M})$.
\end{definition}

A term of  $\mathcal{T}_k$ has for value a $1$ boundaried matroid.
Remark that there are only $6$ different matroids of size $3$ up to isomorphism (see the proof of Lemma \ref{relation_simple}).
Notice also that the class of boundaried matroids of size $k$ closed by the series parallel operation is strictly larger than $\mathcal{T}_k$.
Indeed, when one builds a term, the position of the boundary is imposed.
It could be interesting to extend the result of this section to this broader class.

A term of $\mathcal{T}_k$ can have a non representable matroid for value, since the constants at the leaves are arbitrary matroids. 
Therefore the matroids represented by elements of $\mathcal{T}_k$ and elements of $\mathcal{M}_t^{\mathbb{F}}$ are different.
Nevertheless there is a relation between the operations $\oplus_{p}$ and $\oplus$ as illustrated by the next proposition.

\begin{proposition}
 Let $M_{1}$ (resp. $M_2$) be a matroid of boundary $\left\lbrace p_{1}\right\rbrace $ (resp. $\left\lbrace p_{2}\right\rbrace $) 
represented by the sets of vectors $A_{1}$ (resp. $A_2$).
Then $A_{1} \oplus A_{2}$ represents the matroid $M_{1} \oplus_{p} M_{2}$. 
\end{proposition}

\begin{proof}
We prove that the dependent sets of $A_{1} \oplus A_{2}$ are the same as the dependent sets of $M_{1} \oplus_{p} M_{2}$.
In fact, we only show that a dependent set $D$ of $M_{1} \oplus_{p} M_{2}$ is a dependent set of $A_{1} \oplus A_{2}$.
The converse is easy and left to the reader.
By the definition of $\oplus_{p}$, the dependent set $D$ can be of two different kinds.
It may be the image of a dependent set of $M_{1} \setminus \{p_1\}$ or $M_{2} \setminus \{p_2\}$,
it is then trivially a dependent set of $A_{1} \oplus A_{2}$.

Assume now that $D = D_{1}\setminus \{p_1\} \cup D_{2}\setminus \{p_2\}$, where $D_1$ is a dependent set of 
$M_1$ containing $\{p_1\}$ and $D_2$ a dependent set of $M_2$ containing $\{p_2\}$.
Since $M_1$ and $M_2$ are represented by $A_1$ and $A_2$, we have the following linear 
 dependence relations of their columns in bijection with $D_1$ and $D_2$:
 $$\displaystyle{\lambda_{1} p_{1} + \sum \alpha_{i} A_{1}^{i} = 0} \text{ and } \displaystyle{\lambda_{2} p_{2} + \sum \beta_{i} A_{2}^{i} = 0}$$ 

By linear combination of the two previous equalities we get: 
$$ \lambda_{1}(p_{1} - p_{2}) + \displaystyle{ \sum \alpha_{i} A_{1}^{i} +\sum -\lambda_{1}\lambda_{2}^{-1}\beta_{i} A_{2}^{i} = 0}$$ 
In $A_{1} \oplus A_{2}$, we have $p_1 = p_2$ therefore, the equation becomes:
$$ \displaystyle{ \sum \alpha_{i} A_{1}^{i} +\sum -\lambda_{1}\lambda_{2}^{-1}\beta_{i} A_{2}^{i} = 0}$$
This last equation proves that $D$ is dependent in  $A_{1} \oplus A_{2}$.
\end{proof}

It seems that the previous lemma would fail for generalizations of $\oplus_{p}$ to a boundary larger than one.
Indeed, $\oplus$ is not an operation on matroids as seen in Example \ref{counterexample} with a boundary of size two. 
In fact, one of the natural generalizations of $\oplus_{p}$ to boundary of size $k$ is the $k$ sum (see \cite{oxley1992mt})
which is defined on binary matroids only.

\begin{corollary}
A matroid defined by a term of $\mathcal{T}_k$ whose constants are $\mathbb{F}$-matroids is 
an $\mathbb{F}$-matroid of branch-width at most $k$.
\end{corollary}
 
\begin{proof}
By structural induction on the terms of $\mathcal{T}_k$ whose constants are $\mathbb{F}$-matroids.
The constants are matroids of size $k$ because they are in $\mathcal{T}_k$ and they are representable by hypothesis,
 hence they are of branch-width at most $k$.
Assume now that $T= T_1 \odot_M T_2$, where the values of $T_1$ and $T_2$ are matroid of branch-width $k$
represented by $A_1$ and $A_2$ respectively. All matroids of size $3$ are cycle matroids and hence are representable in any field.
Therefore $M$ is represented by the $3$-partitioned matrix $A$ over $\mathbb{F}$.
Using the previous proposition, we have that $A_1 \odot_A A_2$ represents the same boundaried matroid as $T= T_1 \odot_M T_2$.
Finally, Theorem \ref{equi} proves that $A_1 \odot_A A_2$ is of branch-width $k$, which completes the proof.
\end{proof}

We now define a very general notion of signature to use the previously introduced technique
and illustrate it in this setting. 
A signature describes which sets of elements of the boundary make a set of internal elements dependent. 
Notice that, contrary to the representable matroid case, the signature of a set is unique.
We could use this notion of signature for other operations than $\oplus_p$,
over matroids of boundary bigger than $1$.

\begin{definition}[Signature]
 Let $T$ be a term whose value is a boundaried matroid $M$ and let $X$ be a set of internal elements of $M$. The signature 
of the set $X$ in $T$ is the set of all the subsets $S$ of the boundary such 
that $X \cup S$ is a dependent set in $M$. 
\end{definition}

%dire que la signature est définie sur le matroide avec frontiere et etendre à un sommet quelconque de T en disant 
%on considere les éléments en bijection avec les feuilles et le matroide défini en s ?

In general, if the boundary is of size $s$, there are less than $2^{2^s}$ signatures.
In our setting, the term is in $\mathcal{T}_k$, thus there is only one boundary element denoted by $1$.
We have only three different signatures:
\begin{enumerate}
\item if $X$ is dependent then it is of signature $\{\{\},\{1\}\}$ that we denote by $\mathbf{2}$
\item if $X$ is dependent only when we add the boundary element then it is of signature $\{\{1\}\}$ which we denote by $\mathbf{1}$
\item if $X$ is independent even with the boundary element then it is of signature $\emptyset$ which we denote by $\mathbf{0}$
\end{enumerate}

Note that an empty set is of signature $\mathbf{0}$, because the boundary is an independent set.
We now prove in this context a result similar to Lemma \ref{build}.
We will not need an equivalent of Lemma \ref{unbuild}, since here the signatures are unique.

\begin{lemma}
\label{relation_simple}
There is a relation $R_{p}(\mu,\delta,\lambda, N)$, where the first three arguments are signatures and $N$ 
is a $3$-partitioned matroid of size $3$, such that the following holds.
Let $T = T_1 \odot_N T_2$ be a term of $\mathcal{T}_k$, let $X_{1}$ and $X_2$ be sets of internal elements
of the boundaried matroids represented by respectively $T_1$ and $T_2$.
If the set $X_{1}$ is of signature $\mu$ in $T_1$, the set $X_2$ is of signature $\delta$ in $T_2$
and $R_{p}(\mu,\delta,\lambda, N)$ holds, then the set $X_{1} \cup X_{2}$ is of signature $\lambda$ in $T$.
\end{lemma}
\begin{proof}

There are six $3$-partitioned matroids of size $3$, which we denote by $N_i$ for $i=1,\dots,6$.
We represent each of them by three points in an ellipse.
The bottom left point is $\gamma_1^{N_i}(1)$, the bottom right one is $\gamma_2^{N_i}(1)$ and the top one is $\gamma_3^{N_i}(1)$.
The smaller ellipses enclosing points represent the circuits of the matroid. 
 We give here the value of the relation $R_{p}$ for each $N_i$.
One may then easily check that the proposition holds.

\begin{center}
\begin{tabular}{lll}
$N_1$ & \parbox[c]{2cm}{\includegraphics[scale=0.4]{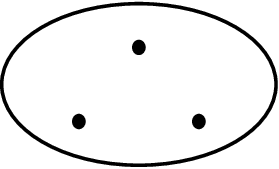}} & \begin{tabular}{l}
$R_p(\cdot,\cdot,\mathbf{2},N_1) = \{ (\mathbf{0},\mathbf{2}),(\mathbf{2},\mathbf{0}),(\mathbf{1},\mathbf{2}),(\mathbf{2},\mathbf{1}),(\mathbf{2},\mathbf{2})\}$\\
$R_p(\cdot,\cdot,\mathbf{1},N_1) = \{ \}$
\end{tabular}\\\\
$N_2$ & \parbox[c]{2cm}{\includegraphics[scale=0.4]{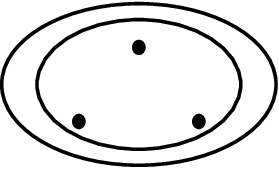}}& 
\begin{tabular}{l}
$R_p(\cdot,\cdot,\mathbf{2},N_2) = \{ (\mathbf{0},\mathbf{2}),(\mathbf{2},\mathbf{0}),(\mathbf{1},\mathbf{2}),(\mathbf{2},\mathbf{1}),(\mathbf{2},\mathbf{2})\}$\\
$R_p(\cdot,\cdot,\mathbf{1},N_2) = \{ (\mathbf{1},\mathbf{1})\}$
\end{tabular}\\\\
$N_3$ & \parbox[c]{2cm}{\includegraphics[scale=0.4]{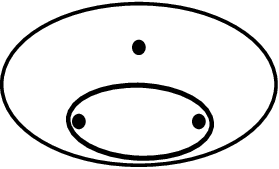}} & 
\begin{tabular}{l}
$R_p(\cdot,\cdot,\mathbf{2},N_3) = \{ (\mathbf{0},\mathbf{2}),(\mathbf{2},\mathbf{0}),(\mathbf{1},\mathbf{2}),(\mathbf{2},\mathbf{1}),(\mathbf{2},\mathbf{2}),(\mathbf{1},\mathbf{1})\}$\\
$R_p(\cdot,\cdot,\mathbf{1},N_3) = \{ \}$
\end{tabular}\\\\
$N_4$ & \parbox[c]{2cm}{\includegraphics[scale=0.4]{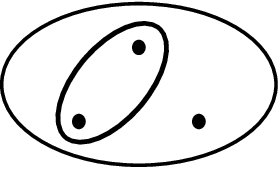}} & 
\begin{tabular}{l}
$R_p(\cdot,\cdot,\mathbf{2},N_4) = \{ (\mathbf{0},\mathbf{2}),(\mathbf{2},\mathbf{0}),(\mathbf{1},\mathbf{2}),(\mathbf{2},\mathbf{1}),(\mathbf{2},\mathbf{2})\}$\\
$R_p(\cdot,\cdot,\mathbf{1},N_4) = \{ (\mathbf{1},\mathbf{1}),(\mathbf{1},\mathbf{0})\}$
\end{tabular}\\\\
$N_5$ & \parbox[c]{2cm}{\includegraphics[scale=0.4]{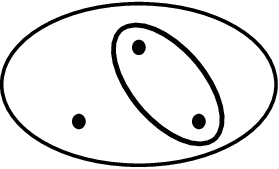}} & 
\begin{tabular}{l}
$R_p(\cdot,\cdot,\mathbf{2},N_5) = \{ (\mathbf{0},\mathbf{2}),(\mathbf{2},\mathbf{0}),(\mathbf{1},\mathbf{2}),(\mathbf{2},\mathbf{1}),(\mathbf{2},\mathbf{2})\}$\\
$R_p(\cdot,\cdot,\mathbf{1},N_5) = \{ (\mathbf{1},\mathbf{1}),(\mathbf{0},\mathbf{1})\}$
\end{tabular}\\\\
$N_6$ & \parbox[c]{2cm}{\includegraphics[scale=0.4]{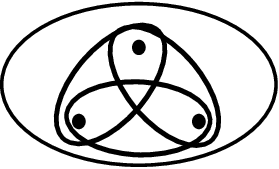}}&
\begin{tabular}{l}
$R_p(\cdot,\cdot,\mathbf{2},N_5) = \{ (\mathbf{0},\mathbf{2}),(\mathbf{2},\mathbf{0}),(\mathbf{1},\mathbf{2}),(\mathbf{2},\mathbf{1}),(\mathbf{2},\mathbf{2}),(\mathbf{1},\mathbf{1})\}$\\
$R_p(\cdot,\cdot,\mathbf{1},N_5) = \{ (\mathbf{0},\mathbf{1}),(\mathbf{1},\mathbf{0})\}$
\end{tabular}
\end{tabular}
\end{center}
\end{proof}
 
We give in the proof the value of the relation $R_{p}$ which plays the same role as $R$ in Lemma \ref{build}.
The precise value of $R_p$ is not important for the proof: what matters is that it only depends on $\mu$, $\delta$, $\lambda$ 
and $N$, but not on $X_{1}$, $X_{2}$ or $T$.

A close examination of the operations $\odot_{N_{i}}$ in the previous proof
shows that we already know three of them:
\begin{itemize}
\item $M_{1} \odot_{N_{1}} M_{2}$ is the matroid given by the direct sum of $M_{1}$ and $M_{2}$ with boundary $\gamma_3^{N_1}$.
 \item $M_{1} \odot_{N_{2}} M_{2}$ is the matroid given by the series connection of $M_{1}$ and $M_{2}$ with boundary $\gamma_3^{N_2}$.
\item $M_{1} \odot_{N_{6}} M_{2}$ is the matroid given by the parallel connection of $M_{1}$ and $M_{2}$ with boundary $\gamma_3^{N_6}$.
\end{itemize}

Observe that a leaf of a term of $\mathcal{T}_k$ represents a matroid of size less than $k$,
while a leaf of a term in $\mathcal{M}_t^{\mathbb{F}}$ represents one element of the matroid it defines.
To use our method on terms of $\mathcal{T}_k$, it is convenient to modify them.
At each leaf labeled by an abstract $1$ boundaried matroid $M$, we root a binary tree with as many leaves as internal elements in $M$. 
We denote by $\tilde{\mathcal{T}_k}$, the sets of terms of $\mathcal{T}_k$ transformed in this way.
We now have a bijection between the leaves of a term of $\tilde{\mathcal{T}_k}$ and the elements of the matroid it represents.
 
\begin{theorem}[Characterization of dependence]
 Let $T$ be a term of $\tilde{\mathcal{T}_k}$ which represents the matroid $M$ and let $X$ be a set of elements of $M$.
  The set $X$ is dependent if and only if there exists a signature $\lambda_{s}$ at each node $s$ of $T$ seen as a labeled tree:
  \begin{enumerate}
     \item  if $s_{1}$ and $s_{2}$ are the children of $s$ of label $\odot_{N}$ then $R_{p}(\lambda_{s_{1}},\lambda_{s_{2}},\lambda_{s},N)$
     \item if $s$ is labeled by an abstract boundaried matroid $N$, then $X \cap N$ is a set of signature $\lambda_{s}$ in $N$
   \item the signature at the root is $\mathbf{2}$
  \end{enumerate}
\end{theorem}
\begin{proof}
 Let us remark that the set $X$ is dependent in $M$ if its signature contains the set $\{\}$,
i.e. if it is $\mathbf{2}$.
We thus have to prove by induction on $T$ that $\lambda_s$ is the signature of $X$ in $T_s$.
The base case is given by the condition $2$, while Lemma \ref{relation_simple} and 
condition $1$ allow us to prove the induction step.
\end{proof}

The function $F$ we use in the next theorem is the same as in Section \ref{sec:rep_matroid}.
It associates a formula of $MSO$ to a formula of $MSO_{M}$ by relativization to the leaves 
 and the use of a formula $dep$, whose new definition is given in the proof of the next theorem.

\begin{theorem}
\label{th:parallel}
There exists a mapping $F$ such that if $T$ is a term of $\tilde{\mathcal{T}_k}$ which represents the matroid $M$ 
and if $f$ is the bijection between the leaves of $T$ and the elements of $M$ then $M \models \phi(\vec{a}) \Leftrightarrow T\models F(\phi(f(\vec{a})))$.
\end{theorem}

\begin{proof}
The demonstration is done by the construction of a formula $dep(Y)$ satisfying 
the conditions of the characterization theorem. We use the formulas defined in the proof of Theorem \ref{main},
condition $1$ is implemented by the formula $\Psi_{1}$ except that $R$ is now the relation $R_{p}$.
In $\Psi_{3}$, we replace $X_{(0,\dots,0)}$ by $X_{\mathbf{2}}$ to satisfy condition $3$.

Let $Q(S,N,\lambda)$ be the relation which is true if and only if $S$ is a subset of the boundaried matroid $N$
of signature $\lambda$. 
Recall that the set of signatures $\lambda_s$ is represented by a set of second-order variables $\vec{X}$.
To enforce condition $2$, we define a formula $\Psi_4(X,\vec{X},s)$.
It is true if and only if each internal node $s$ of signature $\lambda$ is labeled by a boundaried matroid $N$ 
and $\lambda$ is indeed the signature of the intersection of $Y$ with $N$.
We write $Y \cap N = S$ for the fact that the elements of $Y$ which are leaves of a subtree rooted
in a node labeled by the boundaried matroid $N$ form the subset $S$ of $N$. One may check that it is expressible
by a $MSO$ formula. 

$$ \Psi_2(Y,\vec{X},s) = \displaystyle{\bigwedge_{(N, S \subseteq N),\lambda} (label(s) = N \wedge X_{\lambda}(s) \wedge Y \cap N = S) \Rightarrow Q(S,N,\lambda) }   $$

This formula is a conjunction on all boundaried matroids $N$ of size $k$ and their subsets,
which are in number bounded by $2^{2^{k}}$, and on the three possible signatures.
We define the formula $dep$ of size  $O(2^{2^{k}})$:
$$dep(Y) \equiv  \exists \vec{X}  \,\Omega(\vec{X}) \wedge \Psi_{1}( \vec{X} ) \wedge \Psi_{2}(Y, \vec{X})\wedge \Psi_{3}(\vec{X}) $$

The characterization theorem proves that the formula $dep$ is correct
and the theorem is then obtained by a simple induction on the formula.
\end{proof}

\begin{corollary}
 The model-checking problem of $MSO_{M}$ over the set of matroids given by a term of $\mathcal{T}_k$ is decidable in time $f(k,l) \times n$,
where $n$ is the number of elements of the matroid, $l$ is the size of the formula and $f$ is a computable function.
\end{corollary}

\section{Conclusion}

%dire mieux la difference avec hlineny et kral
%dire i feel it is more uniform and elegant :)

In this article, we have studied the representable matroids of bounded branch-width.
We have given a new proof of the fact that model-checking of $MSO_M$ over them can be done in polynomial time (linear if 
a suitable representation is given). Moreover we have linked together the notion of enhanced tree, adapted from 
the branch decomposition, and the terms of $\mathcal{M}_t^{\mathbb{F}}$.
In both cases, we use the same tools, namely the relation $R$ and the characterization of dependent sets through
signatures and $R$.

We have also introduced the set of terms $\mathcal{T}_k$, which represent matroids different from $\mathcal{M}_t^{\mathbb{F}}$.
We have then used the exact same method with signatures and a relation $R_p$ to characterize the dependent sets in 
a matroid represented by a term of $\mathcal{M}_t^{\mathbb{F}}$. 
In fact, we could use this method on any term built from an operation $M_1 \oplus M_2$, such that 
$M_1$ and $M_2$ are restrictions of $M_1 \oplus M_2$. In other words, the operation has to be derived from an amalgam
(or push-out) over a class of matroids.

One natural generalization to our construction, would be to
 lift the condition that the boundaries are independent sets
and thus build more terms from $\oplus$. But it does not seem
that we can obtain more matroids in this way.
In the other hand, if we want to extend the operation of
parallel connection to a boundary of any size, the properties 
of the boundary play a big role.
There are thus two natural open questions: 
\begin{itemize}
 \item How to generalize the class $\mathcal{T}_k$ by allowing boundaries of size larger than one? 
\item Is it possible to design a matroid grammar which unifies both classes
presented in this paper (and possibly more)?
\end{itemize}

%simplifier la partie extension en un paragraphe
%dire prendre n'importe quel frontière pour les représentables
%ca ne change rien mais ca devient critique si on veut étendre 
%la construction series parallel à plus de points

%\paragraph{Possible extensions}

%One could lift the condition that the boundaries are independent sets.
%We obtain a larger set of terms build from $\oplus$ on which 
%the model-checking of $MSO_M$ is still decidable in linear time.
%The question is, have we really represented more matroids in this way?

%One can extend the operation $\oplus_p$ to a boundary of size $k$,
%into an operation called the generalized parallel connection (see \cite{oxley1992mt}).
%However, it is defined only for very few boundaried matroids.
%It could be interesting to study the set of terms obtained from this operation
%and suitable $3$-partitioned matroids. 

%Finally, since there are few operations on abstract matroids,
%we can restrict our attention to matroid subclasses different from cycle matroids or vector matroids.
%We could for instance try to find grammars to generate bicircular matroids, transversal matroids or gammoids. 
%Moreover, we could try to apply our method on structures which  are not matroids, but which are hypergraphs with some additional properties.
%For instance, we could decompose oriented matroids and try to prove similar results on their $MSO$ logic.

\bibliographystyle{elsarticle-num.bst} 
\bibliography{ma_biblio.bib}

\end{document}